\newcommand{\nat}{\mathbb{N}}
\newcommand{\To}{\Rightarrow}
\newcommand{\CF}{{\bf CF}}
\newcommand{\MAT}{{\bf MAT}}
\newcommand{\VEC}{{\bf V}}
\newcommand{\SMAT}{{\bf sMAT}}
\newcommand{\GS}{{\bf GS}}
\newcommand{\suchthat}{:}
\newcommand{\infinite}{\mathit{inf}}
\newcommand{\finite}{\mathit{fin}}
\newcommand{\cb}{\mathit{cb}}
\newcommand{\lW}{\lambda}
\newcommand{\Powerset}[1]{\mathcal{P}({#1})}
\newcommand{\bdisplay}{\begin{description}\footnotesize\item[]}
\newcommand{\edisplay}{\end{description}}
\newcommand{\bquot}[1]{\begin{quotation}\small\noindent
  \textbf{#1}\hspace{\labelsep}\ignorespaces}
\newcommand{\equot}{\unskip\end{quotation}}
\begin{document}
\title{Capacity Bounded Grammars and Petri Nets}
\def\titlerunning{Capacity Bounded Grammars and Petri Nets}
\def\authorrunning{R.~Stiebe, S.~Turaev}

\author{Ralf Stiebe
\institute{Fakult{\"a}t f{\"u}r Informatik\\
  Otto-von-Guericke-Universit{\"a}t Magdeburg\\
  PF 4120 --
  D-39106 Magdeburg --
  Germany}
 \email{stiebe@iws.cs.uni-magdeburg.de}
\and
Sherzod Turaev
\institute{Universitat Rovira i Virgili\\ Facultat de Lletres -- GRLMC\\
E-43005 Tarragona -- Spain}
\email{sherzod.turaev@urv.cat}
}
\maketitle

\begin{abstract}
  A capacity bounded grammar is a grammar whose derivations are restricted by assigning a bound to the number of every nonterminal symbol in the   sentential forms. In the paper the generative power and closure properties of capacity bounded grammars and their Petri net controlled counterparts are investigated.
\end{abstract}

\section{Introduction}
\label{sec:introduction}

The close relationship between Petri nets and language theory has been extensively studied for a long time \cite{cre:man,das:pau}. Results from the theory of Petri nets have been applied successfully to provide elegant solutions to complicated problems from language theory \cite{esp,hau:jan}.

A context-free grammar can be associated with a context-free (communica-tion-free) Petri net, whose places and  transitions, correspond to the nonterminals and the rules of the grammar, respectively, and whose arcs and weights reflect the change in the number of nonterminals when applying a rule.
In some recent papers, context-free Petri nets enriched by additional components have been used to define regulation mechanisms for the defining grammar \cite{das:tur,tur}.  Our paper continues the research in this direction by restricting the (context-free or extended) Petri nets with place capacity.

Quite obviously, a context-free Petri net with place capacity regulates the defining grammar by permitting only those derivations where the number of each nonterminal in each sentential form is bounded by its capacity. A similar mechanism was discussed in \cite{gin:spa1} where the total number of nonterminals in each sentential form is bounded by a fixed integer. There it was shown that grammars regulated in this way generate the family of context-free languages of finite index, even if arbitrary nonterminal strings are allowed as left-hand sides. The main result of this paper is that, somewhat surprisingly, grammars with capacity bounds have a greater generative power.

This paper is organized as follows. Section~\ref{sec:def} contains some necessary definitions and notations from language and Petri net theory. The concepts of grammars with capacities and grammars controlled by Petri nets with place capacities are introduced in section~\ref{sec:capacities}. The generative power and closure properties of capacity-bounded grammars are investigated in sections \ref{sec:power-gs} and \ref{sec:nb-cfg}. Results on grammars controlled by Petri nets with place capacities are given in section~\ref{sec:PNC}.

\section{Preliminaries}
\label{sec:def}

Throughout the paper, we assume that the reader is familiar with basic concepts of formal language theory and Petri net theory; for details we refer to \cite{das:pau,han,rei:roz}.

The set of natural numbers is denoted by $\nat$, the power set of a set S by $\Powerset{S}$. We use the symbol $\subseteq$ for inclusion and $\subset$ for proper inclusion. The \emph{length} of a string $w \in X^*$ is denoted by $|w|$, the number of occurrences of a symbol $a$ in $w$ by $|w|_a$ and the number of occurrences of symbols from $Y\subseteq X$ in $w$ 
by~$|w|_Y$. The \emph{empty} string is denoted by~$\lW$.

A \emph{phrase structure grammar} (due to Ginsburg and Spanier \cite{gin:spa1}) is a 
quadruple $G=(V, \Sigma, S, R)$ where~$V$ and $\Sigma$  are two finite disjoint alphabets
of \emph{nonterminal} and \emph{terminal} symbols, respectively, $S\in V$ is the 
\emph{start symbol} and \hbox{$R\subseteq V^+\times (V\cup \Sigma)^*$} is a finite 
set of \emph{rules}.

A string $x\in (V\cup \Sigma)^*$ \emph{directly derives} a string $y\in (V\cup \Sigma)^*$ in $G$, written as $x\To y$, if and only if there is a rule $u\to v\in R$ such that $x=x_1ux_2$ and $y=x_1vx_2$ for some $x_1, x_2\in (V\cup \Sigma)^*$. The reflexive and transitive closure of the relation $\To$ is denoted by $\To^*$. A derivation using the sequence of rules $\pi=r_1r_2\cdots r_k$, $r_i\in R$, $1\leq i\leq k$, is denoted by $\xRightarrow{\pi}$ or $\xRightarrow{r_1r_2\cdots r_k}$. The \emph{language} generated by $G$, denoted by $L(G)$, is defined by $L(G)=\{w\in \Sigma^*\suchthat S\To^* w\}.$ A phrase structure grammar $G=(V, \Sigma, S, R)$ is called \emph{context-free} if each rule $u\to v\in R$ has $u\in V$.
The family of context-free languages is denoted by $\mathbf{CF}$.

A \emph{matrix grammar} is a quadruple $G=(V, \Sigma, S, M)$ where $V, \Sigma, S$ are defined as for a context-free grammar, $M$ is a finite set of \emph{matrices} which are finite strings (or finite sequences) over a set of context-free rules. The language generated by the grammar $G$ consists of all strings $w\in \Sigma^*$ such that there is a derivation $S\xRightarrow{r_1r_2\cdots r_n}w$ where $r_1r_2\cdots r_n$ is a concatenation of some matrices $m_{i_1}, m_{i_2}, \ldots, m_{i_k}\in M$, $k\geq 1$. The family of languages generated by matrix grammars without erasing rules (with erasing rules, respectively)  is denoted by $\mathbf{MAT}$ (by $\mathbf{MAT}^{\lW}$, respectively).

A \emph{vector grammar} is defined like a matrix grammar, but the derivation sequence $r_1r_2\cdots r_n$ has to be a shuffle of some matrices $m_{i_1}, m_{i_2}, \ldots, m_{i_k}\in M$, $k\geq 1$.
A \emph{semi-matrix grammar} is defined like a matrix grammar, but the derivation sequence $r_1r_2\cdots r_n$ has to be the semi-shuffle of some matrices $m_{i_1}, m_{i_2}, \ldots, m_{i_k}\in M$, $k\geq 1$, i.\,e., from the shuffle of sequences from $\bigcup_{i=1}^t m_i^*$ where
$$M=\{m_1,\ldots,m_t\}.$$
The language families generated by vector and semi-matrix grammars are denoted by $\VEC^{[\lW]}$ and $\SMAT^{[\lW]}$.

\medskip

A \emph{Petri net} (PN) is a construct $N = (P, T, F, \phi)$ where $P$ and $T$ are disjoint finite sets of \emph{places} and \emph{transitions}, respectively,  $F \subseteq (P\times T) \cup (T\times P)$ is the set of \emph{directed arcs}, 
$$\varphi: (P\times T) \cup (T\times P) \rightarrow \{0, 1, 2, \dots\}$$ 
is a \emph{weight function}, where $\varphi(x,y)=0$ for all $(x,y)\in ((P\times T) \cup (T\times P))-F$. A mapping 
$$\mu: P \rightarrow \{0,1,2, \ldots\}$$ 
is called a \emph{marking}. For each place $p\in P$, $\mu(p)$ gives the number of \emph{tokens} in $p$. $^{\bullet}x=\{y\suchthat \, (y,x)\in F\}$ and $x^{\bullet}=\{y\suchthat \, (x,y)\in F\}$ are called the sets of \emph{input} and \emph{output} elements of $x\in P\cup T$, respectively.

A sequence of places and transitions $\rho=x_1x_2\cdots x_n$ is called a \emph{path} if and only if no place or transition except $x_1$ and $x_n$ appears more than once, and $x_{i+1}\in x^\bullet_{i}$ for all $1\leq i\leq n-1$.  We denote by $P_\rho, T_\rho, F_\rho$ the sets of places, transitions and arcs of $\rho$. Two paths $\rho_1$, $\rho_2$ are called \emph{disjoint} if $P_{\rho_1}\cap P_{\rho_2}=\emptyset$ and $T_{\rho_1}\cap T_{\rho_2}=\emptyset$.
A path $\rho=t_{1}p_{1}t_{2}p_{2}\cdots p_{k-1}t_{k}$ ($\rho=p_{1}t_{1}p_{1}t_{2}\cdots t_{k}p_{1}$) is called a \emph{chain} (\emph{cycle}).

A transition $t \in T$ is \emph{enabled} by marking $\mu$ iff $\mu(p)\geq \phi(p,t)$ for all $p\in P$. In this case $t$ can \emph{occur}. Its occurrence transforms the marking $\mu$ into the marking $\mu'$ defined for each place $p \in P$ by $\mu'(p)=\mu(p)-\phi(p,t)+\phi(t,p)$. This transformation is denoted by $\mu\xrightarrow{t}\mu'$. A finite sequence $t_1t_2\cdots t_k$ of transitions is called \emph{an occurrence sequence} enabled at a marking $\mu$ if there are markings $\mu_1, \mu_2, \ldots, \mu_k$ such that $\mu \xrightarrow{t_1} \mu_1 \xrightarrow{t_2} \ldots \xrightarrow{t_k} \mu_k$.  For each $1\leq i\leq k$, marking $\mu_i$ is called \emph{reachable} from marking $\mu$. $\mathcal{R}(N, \mu)$ denotes the set of all reachable markings from a marking $\mu$.

A \emph{marked} Petri net is a system $N=(P, T, F, \phi, \iota)$ where $(P, T, F, \phi)$ is a Petri net, $\iota$ is the \emph{initial marking}. Let $M$ be a set of markings, which will be called \emph{final} markings. An occurrence sequence $\nu$ of transitions is called \emph{successful} for $M$ if it is enabled at the initial marking $\iota$ and finished at a final marking $\tau$ of $M$.

A Petri net $N$ is said to be $k$-\emph{bounded} if the number of tokens in each place does not exceed a finite number $k$ for any marking reachable from the initial marking $\iota$, i.\,e., $\mu(p)\leq k$ for all $p\in P$ and for all $\mu\in \mathcal{R}(N, \iota)$. A Petri net is called \emph{bounded} if it is $k$-bounded for some $k\geq 1$.

A Petri net with \emph{place capacity} is a system $N=(P, T, F, \phi, \iota,\kappa)$ where $(P, T, F, \phi,\iota)$ is a marked Petri net and $\kappa:P \to \nat$ is a function assigning to each place a number of maximal admissible tokens.
A  marking $\mu$ of $N$ is valid if $\mu(p)\leq \kappa(p)$, for each place $p\in P$. A transition $t \in T$ is \emph{enabled} by a marking $\mu$ if additionally the successor marking is valid.


A \emph{cf Petri net} with respect to a context-free grammar $G=(V,\Sigma, S, R)$ is a system 
$$N=(P, T, F, \phi, \beta, \gamma, \iota)$$
where
\begin{itemize}
\item labeling functions $\beta:P\rightarrow V$ and $\gamma:T\rightarrow R$ are bijections;
\item $(p,t)\in F$ iff $\gamma(t)=A\rightarrow \alpha$ and $\beta(p)=A$ and the weight of the arc $(p,t)$ is 1;
\item $(t,p)\in F$ iff $\gamma(t)=A\rightarrow \alpha$, $\beta(p)=x$ where $|\alpha|_x>0$ and the weight of the arc $(t,p)$ is $|\alpha|_x$;
\item the initial marking $\iota$ is defined by $\iota(\beta^{-1}(S))= 1$  and $\iota(p) = 0$ for all $p\in P-\beta^{-1}(S)$.
\end{itemize}

Further we recall the definitions of extended cf Petri nets, and grammars controlled by these Petri nets (for details, see \cite{das:tur,tur}).

Let $G=(V, \Sigma, S, R)$ be a context-free grammar with its corresponding cf Petri net 
$$N=(P, T, F, \phi, \beta, \gamma, \iota).$$
Let $T_1, T_2, \ldots, T_n$ be a partition of $T$.

1. Let $\Pi=\{\rho_1, \rho_2, \ldots, \rho_n\}$ be the set of disjoint chains such that $T_{\rho_i}=T_i$, $1\leq i\leq n$, and 
$$\bigcup_{\rho\in\Pi}P_\rho\cap P=\emptyset.$$
An \emph{$h$-Petri net} is a system 
$N_h=(P\cup Q, T, F\cup E, \varphi, \zeta, \gamma, \mu_0, \tau)$
where \hbox{$Q=\bigcup_{\rho\in\Pi}P_\rho$} and  $E=\bigcup_{\rho\in\Pi}F_\rho$; the weight function $\varphi$ is defined by $\varphi(x,y)=\phi(x,y)$ if $(x,y)\in F$ and $\varphi(x,y)=1$ if $(x,y)\in E$;
the labeling function $\zeta:P\cup Q\rightarrow V\cup\{\lambda\}$ is defined by $\zeta(p)=\beta(p)$ if $p\in P$ and $\zeta(p)=\lambda$ if $p\in Q$; the initial marking $\mu_0$ is defined by $\mu_0(p)=\iota(p)$ if $p\in P$ and $\mu_0(p)=0$ if $p\in Q$; $\tau$ is the final marking where $\tau(p)=0$ for all $p\in P\cup Q$.

2. Let $\Pi=\{\rho_1, \rho_2, \ldots, \rho_n\}$ be the set of disjoint cycles such that $T_{\rho_i}=T_i$, $1\leq i\leq n$, and 
$$\bigcup_{\rho\in\Pi}P_\rho\cap P=\emptyset.$$
A \emph{$c$-Petri net} is a system 
$N_c=(P\cup Q, T, F\cup E, \varphi, \zeta, \gamma, \mu_0, \tau)$
where $Q=\bigcup_{\rho\in\Pi}P_\rho$ and $E=\bigcup_{\rho\in\Pi}F_\rho$; the weight 
function $\varphi$ is defined by $\varphi(x,y)=\phi(x,y)$ if \hbox{$(x,y)\in F$} and 
$\varphi(x,y)=1$ if $(x,y)\in E$; the labeling
function $\zeta:P\cup Q\rightarrow V\cup\{\lambda\}$ is defined by $\zeta(p)=\beta(p)$ 
if $p\in P$ and $\zeta(p)=\lambda$ if $p\in Q$; the initial marking $\mu_0$ is defined
by $\mu_0(p)=\iota(p)$ if $p\in P$, and $\mu_0(p_{i,1})=1$, $\mu_0(p_{i,j})=0$ 
where $p_{i,j}\in P_i$, $1\leq i\leq n$, $2\leq j\leq k_i$; $\tau$ is the final marking 
where $\tau(p)=0$ if $p\in P$, and $\tau(p_{i,1})=1$, $\tau(p_{i,j})=0$ where 
$p_{i,j}\in P_i$, $1\leq i\leq n$, $2\leq j\leq k_i$.

3. Let $\Pi=\{\rho_1, \rho_2, \ldots, \rho_n\}$ be the set of cycles such that $T_{\rho_i}\!=T_i$, $1\leq i\leq n$, \hbox{$P_{1}\cap P_{2}\cap \cdots \cap P_{n}\!=\!\{p_0\}$} and $$\bigcup_{\rho\in\Pi}P_\rho\cap P=\emptyset.$$
An \emph{$s$-Petri net} is a system $N_s=(P\cup Q, T, F\cup E, \varphi, \zeta, \gamma, \mu_0, \tau)$ where $Q=\bigcup_{\rho\in\Pi}P_\rho,   E=\bigcup_{\rho\in\Pi}F_\rho$; the weight function $\varphi$ is defined by $\varphi(x,y)=\phi(x,y)$ if $(x,y)\in F$ and $\varphi(x,y)=1$ if $(x,y)\in E$; the labeling function $\zeta:P\cup Q\rightarrow V\cup\{\lambda\}$ is defined by $\zeta(p)=\beta(p)$ if $p\in P$ and $\zeta(p)=\lambda$ if $p\in Q$; $\mu_0$ is the initial marking where $\mu_0(p_0)=1$ and $\mu_0(p)=\iota(p)$ if $p\in (P\cup Q)-\{p_0\}$; $\tau$ is the final marking where $\tau(p_0)=1$ and $\tau(p)=0$ if $p\in (P\cup Q)-\{p_0\}$.

\begin{example}
Figure \ref{fig:xPNs} depicts extended cf Petri nets which are constructed with respect to the context-free grammar $G'=(\{S, A, B\}, \Sigma, S, R)$ where $R$ consists of $r_0: S\to AB$, $r_1: A\to \lambda$, $ r_3: A\rightarrow aA$, $r_5: A\to bA$, $r_2: B\to \lambda$, $r_4: B\to aB$, $r_6: B\to bB$.%
\hfill$\diamond$
\end{example}

\begin{figure}[ht]
  \begin{center}
    \subfigure[an $h$-Petri net]{\label{fig:hPN}\includegraphics[width=0.3\textwidth]{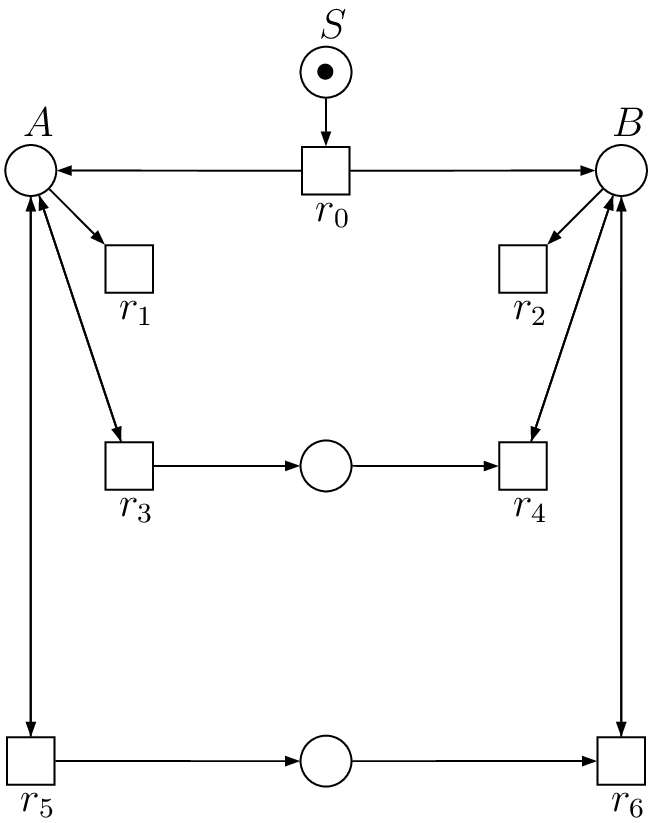}}
    \subfigure[a $c$-Petri net]{\label{fig:cPN}\includegraphics[width=0.3\textwidth]{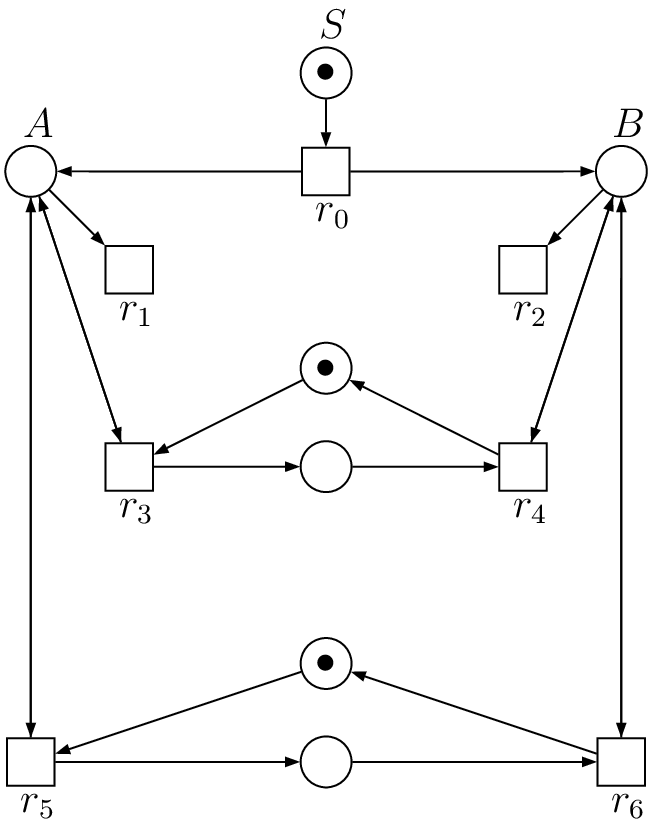}}
    \subfigure[an $s$-Petri net]{\label{fig:sPN}\includegraphics[width=0.3\textwidth]{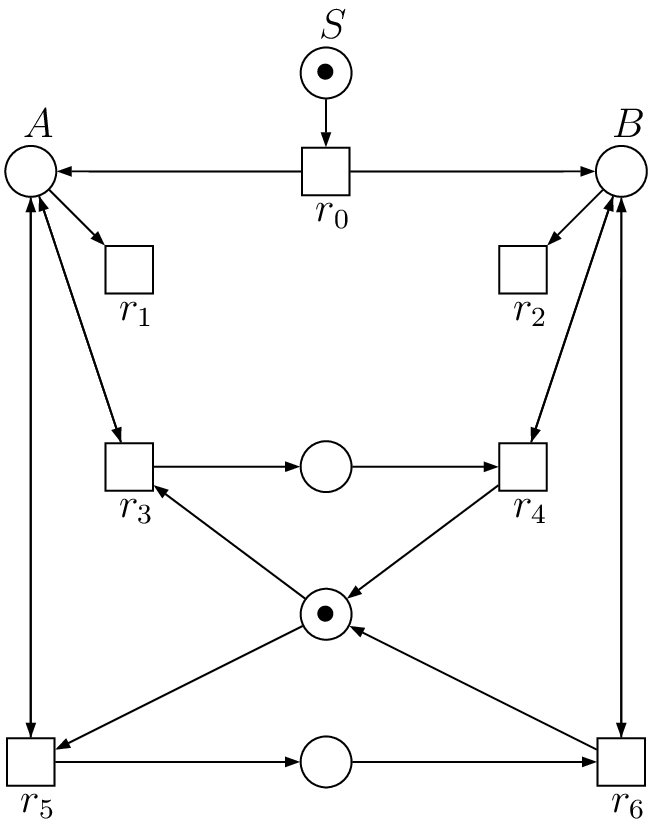}}
  \end{center}
  \caption{Extended Petri nets.}\label{fig:xPNs}
\end{figure}

A \emph{$z$-PN controlled grammar} is a system $G=(V, \Sigma, S, R, N_z)$ where
\hbox{$G'=(V, \Sigma, S, R)$} is a context-free grammar and $N_z$ is  $z$-Petri net with 
respect to the context-free grammar $G'$ where $z\in\{h, c, s\}$. The \emph{language} 
generated by a $z$-Petri net controlled grammar $G$ consists of all strings $w\in \Sigma^*$
such that there is a derivation $S\xRightarrow{r_1r_2\cdots r_k}w\in \Sigma^*$  and a 
successful occurrence sequence of transitions $\nu=t_1t_2\cdots t_k$ of~$N_z$ such 
that $r_1r_2\cdots r_k=\gamma(t_1t_2\cdots t_k)$.

\section{Grammars and Petri nets with capacities}
\label{sec:capacities}
We will now introduce grammars with capacities and show some relations to similar concepts known from the literature.

A \emph{capacity-bounded} grammar is a quintuple $G=(V,\Sigma,S,R,\kappa)$ where
$G'=(V,\Sigma,S,R)$ is a grammar and $\kappa: V \to \nat$ is a capacity function. 
The language of $G$ contains all words $w \in L(G')$ that have a derivation $S\To^* w$ such that $|\beta|_A\leq \kappa(A)$ for all $A\in V$ and each sentential form $\beta$ of the derivation. The families of languages generated by arbitrary capacity-bounded grammars (due to Ginsburg and Spanier) and by context-free capacity-bounded grammars are denoted by $\mathbf{GS}_{\cb}$ and $\mathbf{CF}_{\cb}$, respectively. The capacity function mapping each nonterminal to $1$ is denoted by $\mathbf{1}$.

Capacity bounded grammars are closely related to nonterminal-bounded, deri-vation-bounded and finite index grammars.
A grammar $G=(V,\Sigma,S,R)$ is \emph{nonterminal bounded} if $|\beta|_V\leq k$ for some fixed $k \in \nat$ and all sentential forms $\beta$ derivable in $G$. The \emph{index} of a derivation in $G$ is the maximal number of nonterminal symbols in its sentential forms. $G$ is of \emph{finite index} if every word in $L(G)$ has a derivation of index at most $k$ for some fixed $k\in \nat$. The family of context-free languages of finite index is denoted by
$\mathbf{CF}_{\finite}$.
A \emph{derivation-bounded} grammar is a quintuple $G=(V,\Sigma,S,R,k)$ where $G'=(V,\Sigma,S,R)$ is a grammar and $k \in \nat$ is a bound on the number of allowed nonterminals. The language of $G$ contains all words $w \in L(G')$ that have a derivation $S\To^* w$ such that $|\beta|_V\leq k$, for each sentential form $\beta$ of the derivation. It is well-known that the family of derivation bounded languages is equal to $\mathbf{CF}_{\finite}$, even if arbitrary grammars due to Ginsburg and Spanier are permitted \cite{gin:spa2}.

\begin{example}\label{exa:NBLnotCF1}
Let $G=(\{S, A, B, C, D, E, F\}, \{a, b, c\}, S, R,\mathbf{1})$ be the capacity-bounded grammar where $R$ consists of the rules:
$$
\begin{array}{llll}
r_1: S\to ABCD, & r_2: AB\to aEFb, & r_3: CD\to cAD, &r_4: EF\to EC,\\
r_5: EF\to FC,  & r_6: AD\to FD,   & r_7: AD\to ED,  & r_8: EC\to AB,\\
r_9: FD\to CD,  & r_{10}: FC\to AF,& r_{11}: AF\to \lW, & r_{12}: ED\to \lW.
\end{array}
$$

The possible derivations are exactly those of the form
$$
\begin{array}{ll}
S &\xRightarrow{r_1}ABCD\xRightarrow{(r_2r_3r_4r_6r_8r_9)^n}a^nABb^nc^nCD
\xRightarrow{r_2r_3}a^{n+1}EFb^{n+1}c^{n+1}AD \\
& \xRightarrow{r_5r_7}a^{n+1}FCb^{n+1}c^{n+1}ED\xRightarrow{r_{10}r_{11}r_{12}}a^nb^nc^n
\end{array}
$$
(in the last phase, the sequences $r_{10}r_{12}r_{11}$ and $r_{12}r_{10}r_{11}$ could also be applied with the same result).
Therefore, $L(G)=\{a^nb^nc^n\suchthat n\geq 1\}$.\hfill$\diamond$
\end{example}

\begin{example}\label{exa:NBLnotCF2}
Let $G=(\{S,A,B,C\},\{a,b,c\},S,R,\mathbf{1})$ be the context-free capa-city-bounded grammar where $R$ consists of the rules
$r_1: S\to aBbaAb$, $r_2: A\to aBb$, $r_3: B\to C$, $r_4: C\to A$, 
$r_5: A\to BC$, $r_6: A\to c$,
and let $M$ be the regular set $M=\{a^*ccb^*a^*cb^*\}$. The derivations in $G$ generating words from $M$ are exactly those of the form
$$
\begin{array}{ll}
S &\xRightarrow{r_1}aBbaAb\xRightarrow{(r_3r_2r_4r_3r_2r_4)^n}a^nBb^na^nAb^n
\xRightarrow{r_6r_3r_4}a^nAb^na^ncb^n\\
&\xRightarrow{(r_2r_3r_4)^m}a^{n+m}Ab^{n+m}a^ncb^n
\xRightarrow{r_5r_4r_3r_6r_4r_6}a^{n+m}ccb^{n+m}a^ncb^n
\end{array}
$$
(one can also apply $r_3r_6r_4$ in the third phase and $r_5r_4r_6r_3r_4r_6$ in the last phase with the same result).
Hence,
$
L(G)\cap M=\{a^nccb^na^mcb^m\suchthat n\geq m\geq 1\}\not\in \mathbf{CF},
$
implying that $L(G)$ is not context-free.\hfill$\diamond$
\end{example}

The above examples show that capacity-bounded grammars -- in contrast to derivation bounded grammars -- can generate non-context-free languages.
The generative power of capacity-bounded grammars will be studied in detail in the following two sections.

The notions of finite index and bounded capacities can be extended to matrix, vector and semi-matrix grammars. The corresponding language families are denoted by $\MAT^{[\lW]}_{\finite}$, $\VEC^{[\lW]}_{\finite}$, $\SMAT^{[\lW]}_{\finite}$, $\MAT^{[\lW]}_{cb}$, $\VEC^{[\lW]}_{cb}$, $\SMAT^{[\lW]}_{cb}$.

Also control by Petri nets can in  a natural way be extended to Petri nets with place capacities. Since an extended cf Petri net $N_z$, $z\in\{h, c, s\}$, has two kinds of places, i.\,e., places labeled by nonterminal symbols and \emph{control} places, it is interesting to consider two types of place capacities in the Petri net: first, we demand that only the places labeled by nonterminal symbols are with capacities (\emph{weak capacity}), and second, all places of the net are with capacities (\emph{strong capacity}).

A $z$-Petri net $N_z=(P\cup Q, T, F\cup E, \varphi, \zeta, \gamma, \mu_0, \tau)$ is with \emph{weak capacity} if the corresponding cf Petri net $(P, T, F, \phi, \iota)$ is with place capacity, and \emph{strong capacity} if the Petri net $(P\cup Q, T, F\cup E, \varphi, \mu_0)$ is with place capacity.
A grammar controlled by a $z$-Petri net with \emph{weak} (\emph{strong}) \emph{capacity} is a $z$-Petri net controlled grammar $G = (V, \Sigma, S, R, N_z)$ where $N_z$ is with weak (strong) place capacity.
We denote the families of languages generated by grammars (with erasing rules) controlled by $z$-Petri nets with weak and strong place capacities by $\mathbf{wPN}_{cz}$, $\mathbf{sPN}_{cz}$ ($\mathbf{wPN}^\lW_{cz}$, $\mathbf{sPN}^\lW_{cz}$), respectively, where $z\in\{h, c, s\}$.

\section{The power of arbitrary grammars with capacities}
\label{sec:power-gs}

It will be shown in this section that arbitrary grammars (due to Ginsburg and Spanier) with capacity generate exactly the family of matrix languages of finite index. This is in contrast to derivation bounded grammars which generate only context-free languages of finite index.

First we show that we can restrict to grammars with capacities bounded by~$1$. 
Let $\mathbf{CF}_{\cb}^{1}$ and $\mathbf{GS}_{\cb}^{1}$ be the language families generated by context-free and arbitrary grammars with capacity function $\mathbf{1}$.
\begin{lemma}
  $\mathbf{CF}_{\cb}=\mathbf{CF}_{\cb}^{1}$ and $\mathbf{GS}_{\cb}=\mathbf{GS}_{\cb}^{1}$.
\end{lemma}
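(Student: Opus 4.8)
The plan is to establish the two nontrivial inclusions $\mathbf{CF}_{\cb}\subseteq\mathbf{CF}_{\cb}^{1}$ and $\mathbf{GS}_{\cb}\subseteq\mathbf{GS}_{\cb}^{1}$ by a single copying construction; the reverse inclusions hold trivially, since $\mathbf{1}$ is itself an admissible capacity function. Let $G=(V,\Sigma,S,R,\kappa)$ be capacity-bounded (we may assume $\kappa(A)\ge 1$ for every $A$, deleting unusable nonterminals and their rules otherwise). Put $V'=\{A^{(i)}\suchthat A\in V,\ 1\le i\le\kappa(A)\}$ and $S'=S^{(1)}$. The guiding idea is that a sentential form $\beta$ of $G$ with $|\beta|_A\le\kappa(A)$ for all $A$ is represented by any word $\beta'$ obtained from $\beta$ by superscripting the occurrences of each nonterminal $A$ with pairwise distinct indices from $\{1,\dots,\kappa(A)\}$; the capacity-$\mathbf{1}$ condition on the new grammar then says precisely that no superscripted symbol repeats, equivalently that $|\beta|_A\le\kappa(A)$.

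For the rules, I would put into $R'$, for each rule $u\to v$ of $R$ and each pair $(\ell_u,\ell_v)$ in which $\ell_u$ superscripts the nonterminal occurrences of $u$ injectively within each nonterminal and $\ell_v$ does the same for $v$, the rule $u[\ell_u]\to v[\ell_v]$, where $w[\ell]$ denotes the correspondingly superscripted copy of $w$. Since $R\subseteq V^+\times(V\cup\Sigma)^*$ we obtain $R'\subseteq (V')^+\times(V'\cup\Sigma)^*$, and if $u\in V$ then $u[\ell_u]\in V'$, so the construction preserves context-freeness and settles both equalities at once.

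Correctness has two halves. Erasing all superscripts turns any $\mathbf{1}$-valid derivation of $G'$ into a derivation of $G$, and it maps a $\mathbf{1}$-valid sentential form to a $\kappa$-valid one (the number of occurrences of $A$ equals the number of indices $i$ with $A^{(i)}$ present, hence at most $\kappa(A)$); since erasure fixes terminals, this yields $L(G')\subseteq L(G)$. For $L(G)\subseteq L(G')$ I would lift a $\kappa$-valid derivation of a terminal word step by step, maintaining a representative $\beta'$ of the current sentential form: to imitate the application of $u\to v$ in a context $x_1x_2$, take $\ell_u$ to match the superscripts already carried by the rewritten occurrences in $\beta'$, and choose $\ell_v$ so that, for each nonterminal $A$, the indices $\ell_v$ assigns to the $A$-occurrences of $v$ are disjoint from the indices of $A$ in $x_1x_2$; then $u[\ell_u]\to v[\ell_v]\in R'$, it is applicable to $\beta'$, and the result is again a valid representative.

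The one point requiring care — and the main obstacle — is the existence of such an $\ell_v$. If $x_1x_2$ contains $c$ occurrences of $A$ and $v$ contains $b$, then validity of the original derivation after this step gives $b+c\le\kappa(A)$, so at least $b$ of the $\kappa(A)$ indices are unused by $A$ in $x_1x_2$ and can be matched bijectively with the $A$-occurrences of $v$; this is exactly where one exploits that the capacity constraint holds both before and after every single step of the $G$-derivation, not merely at the end. The remaining checks (that $u[\ell_u]$ really occurs as a factor of the representative, that the new representative again has no repeated superscript, and that $S^{(1)}$ represents $S$) are routine.
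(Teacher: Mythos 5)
Your proof is correct and follows essentially the same route as the paper: you build indexed copies $A^{(i)}$, $1\le i\le\kappa(A)$, of each nonterminal, replace each rule by all of its superscripted variants, and argue by induction on derivation steps that $\kappa$-valid derivations correspond to $\mathbf{1}$-valid ones. The only difference is that you spell out the lifting argument (the counting observation guaranteeing enough free indices) that the paper leaves as a one-line induction claim, and you restrict rule variants to injectively superscripted ones, which is a harmless refinement.
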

\begin{proof}
  Let $G=(V,\Sigma,S,R, \kappa)$ be a capacity-bounded phrase structure grammar. We construct the grammar $G'=(V',\Sigma,(S,1),R')$ with capacity function $\mathbf{1}$ and
  \begin{eqnarray*}
    V'&=& \{(A,i) \suchthat A \in V, 1\leq i\leq \kappa(A)\},\\
    R'&=& \{\alpha' \to \beta' \suchthat \alpha' \in h(\alpha), \beta' \in h(\beta), \mbox{ for some } \alpha \to \beta \in R\},
  \end{eqnarray*}
  where $h:(V\cup \Sigma)^* \to (V' \cup \Sigma)^*$ is the finite substitution defined by $h(a)=\{a\}$, for $a \in \Sigma$, and\linebreak
  \hbox{$h(A)=\{(A,i): 1\leq i\leq \kappa(A)\}$}, for $A \in V$.

  It can be shown by induction on the number of derivation steps that $S \!\To^*_{G,\kappa}\! \alpha$ holds iff \hbox{$(S,1) \!\To^*_{G',1}\! \alpha'$}, for some $\alpha' \in h(\alpha)$.
\end{proof}

\begin{lemma}
  \label{lem:GScbSubsetMATfin}
  $\mathbf{GS}_{\cb}\subseteq \mathbf{MAT}_{\finite}$.
\end{lemma}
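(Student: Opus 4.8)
The plan is to simulate a capacity-bounded grammar $G=(V,\Sigma,S,R,\mathbf{1})$ (we may assume capacity $\mathbf{1}$ by the preceding lemma) by a finite-index matrix grammar. Since every nonterminal occurs at most once in every sentential form, a sentential form has the shape $w_0 A_1 w_1 A_2 \cdots A_m w_m$ with $w_i \in \Sigma^*$ and the $A_i$ distinct nonterminals; so there are only finitely many ``nonterminal patterns'' (subsets of $V$ together with a left-to-right ordering) that can appear. The key idea is to let the matrix grammar carry the terminal parts in the sentential form while recording, in a single ``state'' nonterminal $\langle q\rangle$, which nonterminals of $G$ are currently present and in what order; the terminal blocks sitting between them will be represented by dedicated placeholder nonterminals $X_1,X_2,\dots,X_N$ (with $N$ bounded, since the number of nonterminals present is bounded by $|V|$), each of which can only rewrite to the next placeholder while emitting a terminal, i.e.\ right-linear productions $X_i \to a X_i$ are not allowed (that would break finite index bookkeeping) — instead we keep the whole string of blocks synchronized through matrices.

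More concretely, I would take as nonterminals of the matrix grammar $G_M$ the symbols $\langle q, j\rangle$ where $q$ ranges over the finite set of patterns and $j$ is an index into the pattern, plus terminal-block symbols, and encode one step $x_1 u x_2 \To x_1 v x_2$ of $G$ (with $u\to v\in R$) as follows. Because $u$ and $v$ are strings over $V\cup\Sigma$ and $u$ sits inside a bounded pattern, the effect of applying $u\to v$ is a local, bounded edit of the pattern: delete the span matched by $u$, splice in $v$, and merge adjacent terminal blocks. Each such edit is realized by one matrix that (i) checks the current pattern via the state nonterminal, (ii) performs the splice by rewriting the affected block/state nonterminals, and (iii) appends the terminal symbols of $v$ into the appropriate block. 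A terminating matrix rewrites the state nonterminal and all block nonterminals to $\lambda$ once $G$'s derivation has produced a terminal word. Throughout, the number of nonterminals present in any sentential form of $G_M$ is the (bounded) number of blocks plus one state symbol, so $G_M$ is of finite index; and $L(G_M)=L(G)$ by a straightforward simulation argument in both directions.

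The main obstacle — and the step that needs the most care — is faithfully representing the \emph{terminal blocks} and their left-to-right order without blowing up the index or losing synchronization. A naive representation using one nonterminal per terminal symbol would not be finite index, while collapsing blocks into a single nonterminal loses the positional information needed to know where a rule's right-hand side inserts terminals. The resolution is that, since at most $|V|$ nonterminals are present, at most $|V|+1$ terminal blocks exist at any time; we can therefore use a fixed finite supply of block-nonterminals $X_1,\dots,X_{|V|+1}$, always kept ``in order'', and when a rule $u\to v$ splits or merges blocks we renumber them within a single matrix. Handling erasing rules (which cause block merges) and rules whose right-hand side is purely terminal (which delete a state position) are the fiddly sub-cases; each is a bounded edit and hence expressible by finitely many matrices, but enumerating them correctly is where the bookkeeping lives. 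One then checks that every derivation in $G$ maps to a matrix derivation of bounded index and conversely, completing the inclusion $\mathbf{GS}_{\cb}\subseteq\mathbf{MAT}_{\finite}$.
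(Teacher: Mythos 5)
There is a genuine gap, and it sits exactly at the point that distinguishes Ginsburg--Spanier grammars from context-free ones. Your ``pattern'' records only \emph{which} nonterminals are currently present and in \emph{which order}, but not which of the terminal gaps between consecutive nonterminals are empty. A rule $u\to v$ with $u\in V^+$, $|u|\geq 2$, is applicable in $G$ only if the matched nonterminals occur \emph{contiguously} in the sentential form, i.e.\ with no terminal symbols between them. Your matrices decide applicability ``by checking the current pattern via the state nonterminal'', but consecutive-in-the-pattern does not imply adjacent-in-the-string, and nothing in your construction can inspect whether terminals have already been emitted into the block placeholder $X_i$ that separates the two occurrences. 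Hence the simulation overgenerates. Concretely, take $G=(\{S,A,B\},\{a,b\},S,R,\mathbf{1})$ with $R=\{S\to AaB,\; AB\to b\}$. Then $L(G)=\emptyset$, because $A$ and $B$ are never adjacent and no other rule erases them. In your $G_M$, however, after simulating $S\to AaB$ the pattern is $(A,B)$ with $A,B$ consecutive, so the matrix encoding $AB\to b$ is enabled and a terminal word containing both $a$ and $b$ is derived; thus $L(G_M)\neq L(G)$.

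The repair is to let the bookkeeping record the \emph{maximal nonterminal block} structure, i.e.\ for each gap between consecutive nonterminals whether it is empty (this is finite information, and monotone: a gap once nonempty stays nonempty), and to enable a matrix for $u\to v$ only when $u$ lies inside a single block. Carried out systematically, this is essentially the paper's construction: it first normalizes $G$ so that every derivation step rewrites an entire maximal nonterminal block, then encodes each maximal block as a single nonterminal $[\beta_i]$ left \emph{in place} in the sentential form (so adjacency is automatic and terminals of right-hand sides land at the correct position for free), and uses a string of flags over $V\cup\overline{V}$ to enforce the global capacity bound --- the role your single state symbol was meant to play. Two smaller points: your reason for excluding rules of the form $X_i\to aX_i$ is mistaken (such rules do not increase the index, and some rule of exactly this kind inside the matrices is what you need in order to emit the terminals of $v$ at the right position); and the ``renumbering'' of placeholders within one matrix must be done in a clash-free order (or via primed copies), since otherwise a freshly renamed symbol can be rewritten a second time by a later rule of the same matrix.
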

\begin{proof*}
  Consider some language $L\in \mathbf{GS}_{\cb}$ and let $G=(V,\Sigma,S,R,\mathbf{1})$ be a capacity-bounded phrase structure grammar (due to Ginsburg and Spanier) such that $L=L(G)$. A word $\alpha\in (V\cup \Sigma)^*$ can be uniquely decomposed as
  $$
  \alpha=x_1 \beta_1 x_2 \beta_2 \cdots x_n \beta_n x_{n+1}, x_1,x_{n+1} \in \Sigma^*, x_2,\ldots,x_n \in \Sigma^+, \beta_1,\ldots, \beta_n\in V^+.
  $$
  The subwords $\beta_i$ are referred to as the \emph{maximal nonterminal blocks}  of $\alpha$. Note that the length of a maximal block in any sentential form of a derivation in $G$ is bounded by $|V|$. We will first construct a capacity-bounded grammar $G'$ with $L(G')=L$ such that all words of $L$ can be derived in $G'$ by rewriting a maximal nonterminal block in every step. Let $G'=(V,\Sigma,S,R',\mathbf{1})$ where
  \begin{eqnarray*}
    R'&=& \{\alpha_1 \alpha \alpha_2 \to \alpha_1 \beta \alpha_2 \suchthat
     \alpha \to \beta \in R, \alpha_1,\alpha_2 \in V^*,
     |\alpha_1 \alpha \alpha_2|_A \leq 1, \mbox{ for all } A\in V\}.
  \end{eqnarray*}
  The inclusion $L(G) \subseteq L(G')$ is obvious since $R\subseteq R'$.
  On the other hand, any derivation step in $G'$ can be written as
  $\gamma_1 \underline{\alpha_1 \alpha \alpha_2} \gamma_2 \To_{G'}
  \gamma_1 \underline{\alpha_1 \beta \alpha_2} \gamma_2$, where 
  $\alpha \to \beta \in R$,
  implying that the same step can be performed in $G$ as
  $\gamma_1 \alpha_1 \underline{\alpha} \alpha_2 \gamma_2 \To_{G,1}
  \gamma_1 \alpha_1 \underline{\beta} \alpha_2 \gamma_2.$
  Thus $L(G')\subseteq L(G)$ holds as well.
  Moreover, any derivation step in $G$,
  $\gamma_1 \alpha_1 \underline{\alpha} \alpha_2 \gamma_2 \To_{G,1}
  \gamma_1 \alpha_1 \underline{\beta} \alpha_2 \gamma_2$, 
  $\alpha_1\alpha\alpha_2$ being a maximal nonterminal block,
  can be performed in $G'$  replacing the maximal nonterminal block
  $\alpha_1\alpha\alpha_2$ by $\alpha_1\beta\alpha_2$.

  In the second step we construct a context-free matrix grammar $H$
  which simulates exactly  those derivations in $G'$ that replace a
  maximal nonterminal block in each step. We introduce two alphabets
  \begin{eqnarray*}
    [V]&=&\{[\alpha] \suchthat \alpha \in V^+,
    |\alpha|_A\leq 1, \mbox{ for all } A \in V\}\mbox{ and } \overline{V}=\{\overline{A} \suchthat A \in V\}.
  \end{eqnarray*}
  The symbols of $[V]$ are used to encode each maximal nonterminal block as
  single symbols, while $\overline{V}$ is a disjoint copy of $V$.
  Any word
  $$\alpha=x_1 \beta_1 x_2 \beta_2 \cdots x_n \beta_n x_{n+1},
  x_1,x_{n+1} \in \Sigma^*, x_2,\ldots,x_n \in \Sigma^+,
  \beta_1,\ldots \beta_n\in V^+$$
  such that $|\alpha|_A\leq 1$, for all $A \in V$,
  can be represented by the word
  $[\alpha]=x_1 [\beta_1] x_2 [\beta_2] \cdots x_n [\beta_n] x_{n+1}$,
  where the maximal nonterminal blocks in $\alpha$ are replaced by the
  corresponding symbols from $[V]$.
  The desired matrix grammar is obtained as
  $H=(V_H,\Sigma,S',M)$, with $V_H=[V]\cup V \cup \overline{V} \cup \{S'\}$
  and the set of matrices defined as follows.
  For any rule
  $r=\alpha\to \beta$ in $R'$,
  $M$ contains the matrix $m_r$ consisting of the rules
  \begin{itemize}
  \item $[\alpha] \to [\beta]$
    (note that $\alpha \in [V]$, but $\beta\in ([V]\cup\Sigma)^*$),
  \item $A \to \overline{A}$, for all $A \in V$ such that
    $|\alpha|_A=1$ and $|\beta|_A=0$,
  \item $\overline{A}\to A$, for all $A \in V$ such that
    $|\alpha|_A=0$ and $|\beta|_A=1$.
  \end{itemize}
  (The order of the rules in $m_r$ is arbitrary).
  Additionally, $M$ contains the starting and the terminating matrices
  $$(S'\to [S] S \overline{A_1} \cdots \overline{A_m}) \mbox{ and } (\overline{S} \to \lW,\overline{A_1} \to \lW, \ldots,\overline{A_m} \to \lW),$$
  where $V=\{S,A_1,\ldots,A_m\}$.
  Intuitively, $H$ generates sentential forms of the shape
  $[\beta] \gamma$ where\linebreak $[\beta] \in ([V] \cup \Sigma)^*$ encodes
  a sentential form $\beta$ derivable in $G'$ and
  $\gamma \in (V\cup \overline{V})$ gives a count of the nonterminal symbols
  in $\beta$ as follows: $|\gamma|_A+|\gamma|_{\overline{A}}=1$ and
  $|\gamma|_A=|\beta|_A$.
  Formally, it can be shown by induction that a sentential form over
  $V_H \cup \Sigma$ can be generated after applying $k\geq 1$ matrices
  (except for the terminating)
  iff it has the form $[\beta] \gamma$ where
  \begin{itemize}
  \item $\beta \in (V\cup\Sigma)^*$ can be derived in $G'$ in $k-1$ steps,
  \item $\gamma \in \{S,\overline{S}\} \{A_1,\overline{A}_1\} \cdots\{A_m,\overline{A}_m\}$ and
    $|\gamma|_A=1$ iff $|\beta|_A=1$.\hfill\qed
  \end{itemize}
\end{proof*}
We can also show that the inverse inclusion also holds.
\begin{lemma}\label{MATfinInCScb}
  $\mathbf{MAT}_{\finite}\subseteq \mathbf{GS}_{\cb}$.
\end{lemma}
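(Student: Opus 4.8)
The plan is to prove this inclusion by simulating a finite-index matrix grammar with a Ginsburg--Spanier grammar whose capacity function is $\mathbf{1}$; since such a grammar lies in $\mathbf{GS}_{\cb}^{1}\subseteq\mathbf{GS}_{\cb}$, this is enough. Fix $L\in\mathbf{MAT}_{\finite}$ and a matrix grammar $G=(V,\Sigma,S,M)$ together with a bound $n\in\nat$ such that every word of $L=L(G)$ has a $G$-derivation of index at most $n$; only such ``good'' derivations will be simulated. Along a good derivation a sentential form $\beta=u_0A_1u_1\cdots A_\ell u_\ell$ (with $u_i\in\Sigma^*$, $A_i\in V$ and $\ell\le n$) contains at most $n$ occurrences of each nonterminal, so the occurrences of a given $A\in V$ can be given pairwise distinct \emph{tags} from $\{1,\dots,n\}$; replacing each $A_i$ by a correspondingly tagged symbol yields an encoding $\widehat\beta$ in which every nonterminal occurs at most once, so that capacity~$\mathbf{1}$ is respected. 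The grammar $G'$ to be constructed works on such encodings, enlarged by a bounded supply of auxiliary nonterminals (again of capacity~$1$) that record how far the simulation of a single matrix has progressed.

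The rules of $G'$ are to be arranged so that $\widehat\beta$ can be rewritten to $\widehat{\beta'}$ exactly when $\beta\To^*\beta'$ holds in $G$ by the application of one complete matrix. Applying a matrix $m=(B_1\to\gamma_1,\dots,B_p\to\gamma_p)$ amounts to rewriting, in that order, one occurrence of $B_1$, then one of $B_2$, and so on; in the encoding, the $j$-th step replaces a tagged copy of $B_j$ by the tagged encoding of $\gamma_j$, assigning fresh tags to the newly created occurrences. Since the derivation is good, at most $n$ occurrences of each nonterminal remain afterwards, so free tags are always available; and since all tagged symbols have capacity~$1$, a clashing tag assignment is automatically forbidden, so faithfulness of the encoding is maintained for free. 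The auxiliary symbols serve only to force the $p$ steps of $m$ to be carried out (i)~all of them and (ii)~in the prescribed order before any step of another matrix may begin. Granting this, a $G'$-derivation ending in a word over $\Sigma$ projects --- by deleting the auxiliary symbols and erasing the tags --- onto a $G$-derivation that is a concatenation of complete matrices, which gives $L(G')\subseteq L$; conversely, every good $G$-derivation is reproduced step by step in $G'$, which gives $L\subseteq L(G')$.

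The crux is the enforcement of (i) and (ii). A Ginsburg--Spanier rule has a left-hand side consisting solely of nonterminals, so it can neither carry a symbol past a terminal nor rewrite across a terminal; hence, as soon as $\beta$ has more than one maximal nonterminal block --- which is unavoidable in general, e.g.\ for $\{ww\suchthat w\in\{a,b\}^{+}\}\in\mathbf{MAT}_{\finite}$ --- the occurrences $B_1,\dots,B_p$ rewritten by one matrix may lie in different blocks, and no ``program counter'' symbol can travel from one block to another. The ordering of the $p$ steps must therefore be enforced purely through the capacity bound, precisely in the style of Examples~\ref{exa:NBLnotCF1} and~\ref{exa:NBLnotCF2}: the rules operating on the different blocks are interlocked so that the step simulating $r_{j+1}$ becomes applicable only after the step simulating $r_j$ has been carried out, because performing it prematurely would require creating a second copy of some tagged or auxiliary nonterminal and would thus violate capacity~$\mathbf{1}$. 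Setting up this interlocking uniformly for an arbitrary finite-index matrix grammar --- bounding the number of blocks that can arise, designing the chain of capacity locks so that its only linearisation is the intended rule order, and verifying that every good derivation of $G$ is faithfully reproducible --- is where essentially all the work lies; once the rule set is fixed, the equality $L(G')=L$ follows from the routine projection argument sketched above.
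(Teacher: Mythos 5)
Your write-up is a plan rather than a proof, and the part you leave open is precisely the content of the lemma. The tagging of nonterminal occurrences so that capacity $\mathbf{1}$ suffices, and the projection argument yielding $L(G')\subseteq L(G)$ and $L(G)\subseteq L(G')$ \emph{once the simulation is granted}, are the routine ingredients; what the lemma actually requires is the mechanism forcing every capacity-respecting derivation of $G'$ to execute complete matrices, in the prescribed internal order, even though the occurrences $B_1,\dots,B_p$ rewritten by one matrix may lie in different maximal nonterminal blocks separated by terminals. You correctly identify this as the crux and then explicitly defer it (``designing the chain of capacity locks so that its only linearisation is the intended rule order \dots is where essentially all the work lies''), so no construction is given and nothing is verified; as submitted, the argument does not establish $\MAT_{\finite}\subseteq \GS_{\cb}$.

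Moreover, the deferred step is not a routine elaboration of Examples~\ref{exa:NBLnotCF1} and~\ref{exa:NBLnotCF2}. In a capacity-bounded Ginsburg--Spanier grammar the only global condition a rule can sense is of one kind: it is disabled exactly when applying it would create a nonterminal already present somewhere in the sentential form, i.e.\ an \emph{absence} test at creation time. A rule cannot wait for the \emph{presence} of a symbol in another block, and it can only consume nonterminals that occur contiguously as its left-hand side. Hence ``step $j$ unlocks step $j{+}1$'' cannot be realized by letting step $j$ delete a lock sitting in a different block, nor by letting step $j{+}1$ test that step $j$'s output exists; the lock symbols have to be kept adjacent to the working nonterminals inside each block and re-arranged as blocks are created, split by freshly produced terminals, or erased, in the hand-crafted token-passing style of Example~\ref{exa:NBLnotCF1}. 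Designing such companion symbols uniformly for an arbitrary finite-index matrix grammar, bounding their number, and proving that no unintended interleaving respects capacity $\mathbf{1}$ (so that $L(G')$ does not leak words outside $L$) is the real substance of the lemma, and it is missing from your proposal.
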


\section{Capacity-bounded context-free grammars}
\label{sec:nb-cfg}

In this section, we investigate capacity-bounded context-free grammars. It turns out that they are strictly between context-free languages of finite index and matrix languages of finite index. Closure properties of capacity bounded languages with respect to AFL operations are shortly discussed at the end of the section.

As a first result we show that the family of context-free languages with finite index is properly included in $\CF_{\cb}$.
\begin{lemma}
  \label{thm:hierarchyCapacityBounded1}
  $\CF_{\finite} \subset \CF_{\cb}$.
\end{lemma}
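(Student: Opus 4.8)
For the inclusion $\CF_{\finite} \subseteq \CF_{\cb}$: given a context-free grammar $G=(V,\Sigma,S,R)$ of finite index $k$, the standard finite-index normal form allows us to assume every sentential form in an index-$k$ derivation has at most $k$ nonterminal occurrences. The naive idea of just taking capacity function $\mathbf{k}$ (mapping every nonterminal to $k$) fails, because capacity bounds the occurrences of each individual nonterminal, not the total. The fix is the familiar ``indexed copies'' trick, essentially the construction already used in the proof of Lemma~\ref{MATfinInCScb} and in the first lemma of Section~\ref{sec:power-gs}: replace each nonterminal $A$ by $k$ marked copies $A^{(1)},\dots,A^{(k)}$, so that a derivation of index $\le k$ can always be relabeled to use each copy at most once. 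Concretely I would take $V' = \{A^{(i)} : A\in V,\ 1\le i\le k\}$, capacity function $\mathbf{1}$, start symbol $S^{(1)}$, and for each rule $A\to w$ of $R$ include all rules $A^{(i)}\to w'$ where $w'$ ranges over the relabelings of $w$ that assign distinct superscripts to the nonterminal occurrences of $w$. One then argues by induction on derivation length that $S\To^*_G\alpha$ by an index-$\le k$ derivation iff $S^{(1)}\To^*_{G',\mathbf{1}}\alpha'$ for some relabeling $\alpha'$ of $\alpha$ using distinct superscripts within each sentential form; the capacity-$\mathbf{1}$ restriction on $G'$ exactly matches the index-$k$ restriction on $G$. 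Terminal words are unaffected by relabeling, so $L(G)=L(G')$.

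For properness, I need a language in $\CF_{\cb}$ that is not context-free, and Example~\ref{exa:NBLnotCF2} already supplies one: the context-free capacity-bounded grammar there generates a language $L$ with $L\cap M\notin\CF$ for a regular set $M$, hence $L\notin\CF$ since $\CF$ is closed under intersection with regular sets. Since $\CF_{\finite}\subseteq\CF$, this $L$ witnesses $\CF_{\finite}\ne\CF_{\cb}$, and combined with the inclusion above we get $\CF_{\finite}\subset\CF_{\cb}$. (Alternatively one could cite Example~\ref{exa:NBLnotCF1}, but that grammar is not context-free, so Example~\ref{exa:NBLnotCF2} is the right one to invoke here.)

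The main obstacle is the correctness of the relabeling correspondence in the inclusion direction: one must check that whenever a rule is applied in the index-$k$ derivation of $G$, the corresponding relabeled rule is available in $G'$ and its application keeps all superscripts distinct within the new sentential form. This works because after applying $A\to w$ in $G$ the result still has index $\le k$, so there are enough free superscripts to assign to the (at most $k$) nonterminal occurrences; but the bookkeeping — tracking which superscripts are ``in use'' and matching the nondeterministic choice of relabeling in $G'$ to the positions in the $G$-derivation — is the part that needs care. The converse direction is routine: any capacity-$\mathbf{1}$ derivation in $G'$ projects (by erasing superscripts) to a derivation in $G$, and the capacity bound forces every sentential form to have at most $k|V|$ nonterminal occurrences — in fact at most $k$ per $V'$-block after a slightly more careful count — so the projected derivation has bounded index.
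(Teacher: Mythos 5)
Your proof is correct, but the route you take for the inclusion is needlessly roundabout, and the reason you give for taking it is mistaken. You claim that simply reusing the grammar with the constant capacity function $\mathbf{k}$ fails ``because capacity bounds the occurrences of each individual nonterminal, not the total.'' In fact that is exactly the paper's one-line proof, and it works: the per-nonterminal bound $k$ is a \emph{weaker} restriction than the index bound $k$, since $|\beta|_A\leq|\beta|_V$ for every nonterminal $A$, so every index-$\leq k$ derivation of $G$ is already a valid derivation of the capacity-bounded grammar $(G,\mathbf{k})$, giving $L(G)\subseteq L(G,\mathbf{k})$. The converse inclusion costs nothing, because by definition a capacity-bounded grammar only selects those words of $L(G)$ that admit a suitable derivation --- a capacity function can never add words, only discard derivations. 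Hence $L(G,\mathbf{k})=L(G)$ and no relabeling construction is needed. Your indexed-copies construction (essentially the normalization trick behind the paper's first lemma, $\CF_{\cb}=\CF_{\cb}^{1}$) is nevertheless correct and yields the same conclusion; note only that in the converse direction all you need is that a capacity-$\mathbf{1}$ derivation in $G'$ projects to \emph{some} derivation of the terminal word in $G$, so your remarks about the index of the projected derivation are superfluous. What the detour buys is nothing beyond extra bookkeeping. The properness argument via Example~\ref{exa:NBLnotCF2} (the language generated there is not context-free, hence not in $\CF_{\finite}\subseteq\CF$) is exactly the paper's.
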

\begin{proof}
  Any context-free language generated by a grammar $G$ of index $k$ is also generated by the capacity-bounded grammar $(G,\kappa)$ where $\kappa$ is the capacity function constantly $k$. The properness of the inclusion follows from Example~\ref{exa:NBLnotCF2}.
\end{proof}

An upper bound for $\CF_{\cb}$ is given by the inclusion $\CF_{\cb}\subseteq \GS_{\cb}=\MAT_{\finite}$. We can prove the properness of the inclusion by presenting a language from $\MAT_{\finite} \setminus \CF_{\cb}$.
\begin{lemma}
  $L=\{a^n b^n c^n \suchthat n\geq 1\} \notin \mathbf{CF}_{\cb}$.
\end{lemma}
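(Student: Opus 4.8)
The plan is to argue that any context-free capacity-bounded grammar generating $L=\{a^nb^nc^n : n\ge 1\}$ would have to produce long runs of $a$'s, $b$'s and $c$'s, and that with only finitely many nonterminals, each appearing boundedly often, the structure of a derivation is too poor to keep these three counts equal. By the first Lemma of Section~\ref{sec:power-gs} we may assume the capacity function is $\mathbf{1}$; so in every sentential form each nonterminal occurs at most once, hence every sentential form has at most $|V|$ nonterminals. Fix such a grammar $G=(V,\Sigma,S,R,\mathbf 1)$ and suppose $L(G)=L$; let $N=|V|$ and pick $n$ very large compared to $N$ and to the length of the longest right-hand side in $R$.

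First I would set up a pumping-type analysis on a fixed derivation $D\colon S\To^* a^nb^nc^n$. Because $D$ is context-free and of index $\le N$, it has an associated derivation tree of width $\le N$; as usual one extracts from a sufficiently long path a repeated nonterminal $A$, giving a decomposition of the word as $uvwxy$ with $vwx$ short, $|vx|\ge 1$, and $uv^iwx^iy\in L$ for all $i$ — \emph{provided} the intermediate sentential forms still respect the capacity bound $\mathbf 1$. The key point (and the crux of the argument) is to verify that the pumped derivations remain valid capacity-bounded derivations: since the relevant subderivation $A\To^* vAx$ is itself a capacity-bounded derivation using the nonterminal $A$ once, iterating it reuses exactly the same nonterminal footprint, so no capacity is exceeded. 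Once pumping is justified, $v,x$ must be powers of a single letter (else $uv^2wx^2y\notin a^*b^*c^*$), so pumping changes at most two of the three letter-counts, contradicting membership in $L$ for $i=2$.

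The subtlety I expect to be the main obstacle is that the naive pumping extraction need not respect the index/capacity structure: reusing a subderivation can duplicate a \emph{context} nonterminal that also occurs elsewhere in the sentential form, violating $\kappa=\mathbf1$. To handle this cleanly I would instead take the constructions from the proof of Lemma~\ref{lem:GScbSubsetMATfin}: normalise $G$ to a matrix grammar $H$ of finite index with $L(H)=L$, and then invoke the well-known pumping lemma for finite-index (equivalently, $k$-index matrix or $\mathbf{CF}_{\finite}$-in-disguise) languages. Actually the cleanest route is: $\mathbf{CF}_{\cb}\subseteq\MAT_{\finite}$, and finite-index matrix languages enjoy an interchange/pumping property under which pumping acts on a bounded number of ``tracks''; since $L$ would require three synchronised unbounded tracks, one derives a contradiction exactly as in the classical proof that $\{a^nb^nc^n\}$ is not of finite index. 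So the proof reduces to: (i) cite $\CF_{\cb}\subseteq\MAT_{\finite}$; (ii) quote the finite-index pumping lemma; (iii) observe that pumping on finitely many blocks cannot maintain the ternary equality $\#a=\#b=\#c$.

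Thus the real work is step (iii), packaged against the finite-index pumping lemma; everything else is bookkeeping, and the capacity-$\mathbf 1$ reduction at the start is what makes the index bound — and hence the pumping lemma — applicable.
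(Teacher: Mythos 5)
Your fallback route --- (i) embed $\CF_{\cb}$ into $\MAT_{\finite}$, (ii) apply a ``finite-index pumping lemma'', (iii) conclude that three synchronised counts cannot be maintained --- cannot work, because $L=\{a^nb^nc^n \suchthat n\geq 1\}$ \emph{is} a matrix language of finite index: a matrix grammar with the single matrix $(A\to aA, B\to bB, C\to cC)$ plus a start and a termination matrix generates it with index $3$ (equivalently, Example~\ref{exa:NBLnotCF1} together with Lemma~\ref{lem:GScbSubsetMATfin} puts $L$ into $\MAT_{\finite}$). Whatever pumping/interchange property holds for $\MAT_{\finite}$ pumps several tracks \emph{simultaneously} and therefore preserves the equality $\#a=\#b=\#c$; step (iii) is simply false. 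More fundamentally, any argument that only uses $L(G)\in\MAT_{\finite}$ is blind to the distinction the lemma is supposed to establish (it would equally ``prove'' $L\notin\GS_{\cb}$, contradicting Theorem~\ref{thm:hierarchyCapacityBounded}). Your first, direct pumping sketch also has an unrepaired gap at exactly the point you flag: to iterate $A\To^* vAx$ you must either first reorder the derivation so that the context of $A$ is terminal before $A$ is expanded, or interleave the repeated copies with the context derivation; in either case a nonterminal occurring in the context at that moment can coincide with a nonterminal used inside the copy (or with $A$ itself), so the sentential forms of the rearranged derivation need not respect capacity $\mathbf{1}$. Saying that the subderivation ``reuses the same footprint'' does not address this, and this order-sensitivity is precisely what gives $\CF_{\cb}$ its extra power (Example~\ref{exa:NBLnotCF2}), so it cannot be waved away.

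The paper's proof takes a different, non-pumping route that you may want to study: split $V$ into $V_{\finite}$ and $V_{\infinite}$ according to whether the yield language $L(G_A)$ is finite or infinite, take $w=a^{rK}b^{rK}c^{rK}$ with $K$ bounding all finite yields and $r$ the longest right-hand side, and look at the \emph{last} sentential form of a derivation of $w$ containing a symbol $A\in V_{\infinite}$. Because every other nonterminal present is finite-yield and never reproduces $A$, the rewriting of $A$ can be postponed until the rest is terminal without violating the capacity bound, giving $S\To^* xAz\To^* xyz=w$ with $|y|\leq rK$. Then \emph{any} $y'\in L(G_A)$ may be substituted for $y$; since $L(G_A)$ is infinite while $y$ meets at most two of the three letter blocks, some $xy'z\notin L$, so $L(G)\neq L$. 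The work your proposal defers to a pumping lemma is done here by this substitution argument, whose capacity-validity is secured by the careful choice of $A$ --- that is the idea missing from your attempt.
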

\begin{proof}
  Consider a capacity-bounded context-free grammar $G=(V,\Sigma,S,R,\mathbf{1})$ such that
  $L \subseteq L(G)$. For $A \in V$, let $G_A=(V,\Sigma,R,A,\mathbf{1})$.
  The following holds obviously for any derivation in $G$ involving $A$: If $\alpha A \beta \To^*_{G} xyz$, where $\alpha,\beta \in (V\cup \Sigma)^*$, $x,y,z \in \Sigma^*$ and $y$ is the yield of $A$, then $y \in L(G_A)$. On the other hand, for all $x,y,z \in \Sigma^*$ such that $y\in L(G_A)$, the relation $xAz \To^*_{G} xyz$ holds.
  The nonterminal set $V$ can be decomposed as $V=V_{\infinite} \cup V_{\finite}$, where
  \begin{eqnarray*}
    V_{\infinite} &=& \{A \in V \suchthat L(G_A) \mbox{ is infinite}\}\mbox{ and } V_{\finite} = \{A \in V \suchthat L(G_A) \mbox{ is finite}\}.
  \end{eqnarray*}
  Let $K$ be a number such that $|w|<K$, for all
  $w \in \bigcup_{A \in V_{\finite}} L(G_A)$.
  Consider the word $w=a^{rK} b^{rK} c^{rK}$, where $r$ is the longest length
  of a right side in a rule of~$R$.
  There is a derivation $S \To^*_{G} w$.
  Consider the last sentential form $\alpha$ in this derivation that contains
  a symbol from $V_{\infinite}$. Let this symbol be $A$.
  All other nonterminals in $\alpha$ are from $V_{\finite}$, and none of them
  generates a subword containing $A$ in the further derivation process.
  We get thus another derivation of $w$ in $G$ by postponing the rewriting
  of $A$ until all other nonterminals have vanished by applying on them
  the derivation sequence of the original derivation.
  This new derivation has the form
  $S\To^*_{G} \alpha \To^*_{G} xAz \To^*_{G} xyz=w.$
  The length of $y$ can be estimated by $|y|\leq rK$, as $A$ is in the
  first step replaced by a word over $(\Sigma \cup V_{\finite})$ of length
  at most $r$.

  By the remarks in the beginning of the proof, any word $xy'z$ with
  $y' \in L(G_A)$ can be derived in $G$. A case analysis shows
  that $xy'z$ is not in $L$, for any $y'\neq y$.
  Hence $L(G) \neq L$.
\end{proof}

The results can be summarized as follows:
\begin{theorem} \label{thm:hierarchyCapacityBounded}
  $\mathbf{CF}_{\finite}\subset \mathbf{CF}_{\cb} \subset \mathbf{GS}_{\cb}=\mathbf{MAT}_{\finite}.$
\end{theorem}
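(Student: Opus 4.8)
The plan is to obtain the theorem as a corollary of the lemmas already proved, since all three parts are essentially in hand. First, the inclusion $\CF_{\finite} \subset \CF_{\cb}$ together with its properness is exactly Lemma~\ref{thm:hierarchyCapacityBounded1}, so that end of the chain requires nothing new.

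Next I would establish the equality $\GS_{\cb} = \MAT_{\finite}$ simply by combining the two opposite inclusions already available, namely $\GS_{\cb} \subseteq \MAT_{\finite}$ (Lemma~\ref{lem:GScbSubsetMATfin}) and $\MAT_{\finite} \subseteq \GS_{\cb}$ (Lemma~\ref{MATfinInCScb}). For the inclusion $\CF_{\cb} \subseteq \GS_{\cb}$ I would observe that a context-free grammar is a special case of a phrase-structure grammar in the sense of Ginsburg and Spanier, and that the capacity restriction on sentential forms is defined verbatim in both settings; hence any capacity-bounded context-free grammar is in particular a capacity-bounded Ginsburg--Spanier grammar generating the same language, which gives the inclusion at once.

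It then remains to show that this last inclusion is proper, i.e.\ to exhibit a witness in $\GS_{\cb} \setminus \CF_{\cb}$. I would take $L = \{a^n b^n c^n \suchthat n \geq 1\}$: the preceding lemma gives $L \notin \CF_{\cb}$, while Example~\ref{exa:NBLnotCF1} exhibits a capacity-bounded (indeed $\mathbf{1}$-bounded) Ginsburg--Spanier grammar generating $L$, so $L \in \GS_{\cb}$; alternatively one may note that $L$ is a finite-index matrix language and invoke $\GS_{\cb} = \MAT_{\finite}$. Chaining $\CF_{\finite} \subset \CF_{\cb} \subseteq \GS_{\cb} = \MAT_{\finite}$ with this strict separation then yields the displayed statement. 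There is no genuine obstacle here — the theorem is a bookkeeping assembly of earlier results; the single point needing a line of justification is that the separating language actually lies in $\GS_{\cb}$, which is precisely what Example~\ref{exa:NBLnotCF1} provides.
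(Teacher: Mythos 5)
Your proposal is correct and follows essentially the same route as the paper: Theorem~\ref{thm:hierarchyCapacityBounded} is indeed just the assembly of Lemma~\ref{thm:hierarchyCapacityBounded1}, Lemmas~\ref{lem:GScbSubsetMATfin} and~\ref{MATfinInCScb}, the trivial inclusion $\mathbf{CF}_{\cb}\subseteq\mathbf{GS}_{\cb}$, and the separation via $\{a^nb^nc^n : n\geq 1\}\notin\mathbf{CF}_{\cb}$ (the paper exhibits it as a member of $\mathbf{MAT}_{\finite}\setminus\mathbf{CF}_{\cb}$, while you may equally cite Example~\ref{exa:NBLnotCF1} for membership in $\mathbf{GS}_{\cb}$). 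Nothing is missing.
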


As regards closure properties, we remark that the constructions showing the closure of $\mathbf{CF}$ under homomorphisms, union, concatenation and  Kleene closure can be easily extended to the case of capacity bounded languages.

\begin{theorem}\label{thm:closureCapacityBounded}
  $\mathbf{CF}_{\cb}$ is closed under homomorphisms, union, concatenation and Kleene closure.
\end{theorem}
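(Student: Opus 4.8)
The plan is to adapt the standard closure constructions for $\mathbf{CF}$ to capacity-bounded grammars, being careful that the bookkeeping never violates a capacity bound. Throughout, by Lemma~1 we may assume all grammars involved have capacity function $\mathbf{1}$; this is convenient because it means every nonterminal may occur at most once in any sentential form, so the constructions must be designed to preserve this invariant. Let $G_1=(V_1,\Sigma,S_1,R_1,\mathbf{1})$ and $G_2=(V_2,\Sigma,S_2,R_2,\mathbf{1})$ be context-free capacity-bounded grammars with disjoint nonterminal alphabets, generating $L_1$ and $L_2$.

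For \emph{homomorphisms}, given $h:\Sigma^*\to\Delta^*$, I would replace each terminal $a$ occurring on a right-hand side by $h(a)$; since this touches no nonterminal, every sentential form's nonterminal content is unchanged, so the $\mathbf{1}$-bound is trivially preserved and the resulting grammar generates $h(L_1)$. For \emph{union}, introduce a fresh start symbol $S$ and the rules $S\to S_1$ and $S\to S_2$; a derivation picks one branch and thereafter stays entirely within $V_1$ or entirely within $V_2$, so at every step the multiset of nonterminals is exactly that of a legal derivation in the chosen $G_i$ (plus possibly the single symbol $S$ at the very first step), hence no bound is exceeded. For \emph{concatenation}, the naive rule $S\to S_1S_2$ is problematic: a sentential form would then simultaneously contain nonterminals from a $G_1$-derivation and from a $G_2$-derivation, which is fine for the $\mathbf{1}$-bound (the alphabets are disjoint), but one must check that the interleaving of the two sub-derivations is admissible — it is, because capacities are checked per sentential form and the two halves never share a nonterminal, so the bound for each $A\in V_1\cup V_2$ is respected at every step. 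Thus $S\to S_1S_2$ together with $R_1\cup R_2$ works and generates $L_1L_2$.

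The genuinely delicate case is \emph{Kleene closure}. The usual trick $S\to S_1S\mid\lambda$ fails immediately: iterating it would place two copies of $S_1$ (and, once they start rewriting, overlapping copies of every nonterminal of $G_1$) in the same sentential form, violating the $\mathbf{1}$-bound. The fix I would use is to force the copies of $G_1$ to be generated and fully reduced to terminals \emph{one at a time}, left to right. Concretely, take a fresh start symbol $S$ and a fresh "active" nonterminal; use a rule $S\to \lambda$ and a rule $S\to S_1'$ where $S_1'$ is a marked copy of $S_1$ that behaves like $S_1$ but, at the moment its derivation would complete (i.e.\ when it would be rewritten to a word over $\Sigma^*$), instead regenerates a fresh $S$ to its right. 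Since at any time at most one "copy" of $G_1$ is live — the previous copy having been fully terminalized before the next $S$ is spawned — each nonterminal of $G_1$ occurs at most once, and the capacity bound holds. One has to be slightly careful because $G_1$ need not produce its output in a left-to-right sweep, so "the copy is finished" is not a local condition; the standard way around this is to first normalize $G_1$ so that derivations can be assumed to terminate in a controlled fashion, or equivalently to route the "restart" token through the grammar. I expect this normalization/sequencing argument to be the main obstacle; the other three operations are essentially immediate once one observes that disjointness of alphabets plus the per-sentential-form nature of the capacity check makes the classical constructions go through verbatim.

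In all four cases the correctness proof is the routine two-way inclusion: every derivation in the new grammar projects to legal capacity-bounded derivations in the components, and conversely any choice of component derivations can be stitched together without exceeding a bound. I would state these as inductions on derivation length, as is done for Lemma~\ref{lem:GScbSubsetMATfin}, and omit the details.
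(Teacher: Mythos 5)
Your constructions for homomorphism, union and concatenation are fine and match what the paper has in mind (it leaves those cases to the reader). The gap is precisely in the one case the paper proves, Kleene closure, and it stems from a misreading of the definition: a word belongs to the language of a capacity-bounded grammar if it has \emph{some} derivation respecting the bounds, not if \emph{all} derivations do. Because of this, the ``usual trick'' you dismiss is exactly the paper's construction and it does work: take a fresh start symbol $S'$ with rules $S'\to SS'$ and $S'\to \lW$ added to $R$, capacity $\mathbf{1}$ throughout. For completeness, derive the factors \emph{sequentially}: spawn one $S$, finish its derivation to $w_i$ entirely, and only then apply $S'\to SS'$ again; at every moment the sentential form is $w_1\cdots w_{i-1}\beta S'$ with $\beta$ a sentential form of a capacity-respecting derivation in $G$, so the bound $\mathbf{1}$ is never exceeded. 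Derivations that interleave two live copies may indeed violate the bound, but they are simply not needed (and not counted). For soundness, in any capacity-respecting derivation of $G'$ the descendants of the $i$-th spawned $S$ form a contiguous subword of each sentential form, so every sentential form of that subderivation is a subword of a sentential form of the whole derivation and hence also respects $\mathbf{1}$; therefore each yield $w_i$ lies in $L$ and the generated word is in $L^*$.

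Your proposed repair --- a marked copy $S_1'$ that ``regenerates a fresh $S$ to its right the moment its derivation would complete'' --- does not close this gap: as you yourself note, completion of a copy is not a condition a context-free rule can detect locally, and the promised ``normalization/sequencing'' of $G_1$ is never carried out (nor is it clear how it could be, within context-free rules, without exactly the observation above that makes it unnecessary). So as written the Kleene-closure case, which is the only genuinely delicate one, remains unproved in your argument; replacing it with the sequential-derivation argument for the naive construction fixes it.
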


\begin{proof}
  We give here a proof only for the Kleene closure and leave the other cases to the reader.

  Let $L\in \mathbf{CF}_{\cb}$ and let $G=(V,\Sigma,S,R,\mathbf{1})$ be a context-free grammar such that $L=L(G)$. We construct   $G'=(V\cup\{S'\},\Sigma,S',R\cup \{S'\to SS',S'\to \lW\},\mathbf{1}).$

  Any terminating derivation in $G'$ that applies the rule $S'\to SS'$ $k$ times generates a word\linebreak \hbox{$w=w_1 w_2 \cdots w_k$}, where $w_i$ is the yield of the $i$-th symbol $S$ introduced by $S'\to SS'$. The subderivation from $S$ to $w_i$ only uses rules from $R$. Moreover, any sentential form $\beta_i$ in this  subderivation is the subword of some sentential form $\beta$ in the derivation of $w$ in $G'$. Hence, $|\beta_i|_A \leq |\beta|_A\leq 1$, for all $1\leq i \leq k$ and all $A \in V$. Consequently, $w_i\in L(G)=L$ and $w \in L^*$.

  Conversely, any word $w=w_1 w_2 \cdots w_k$ with $w_i\in L$, for $1\leq i\leq k$, can be obtained in $G'$ by the derivation
  $$
  S'\To SS' \To^* w_1 S' \To w_1SS' \To^* w_1w_2S' \To^* w_1w_2 \cdots w_kS' \To w_1w_2\cdots w_k
  $$
  where the subwords $w_i$ are derived from $S$ as in $G$.
\end{proof}

As regards closure under intersection with regular sets and under inverse homomorphisms, the constructions to show closure of $\mathbf{CF}$ cannot be extended, since they do not keep the capacity bound. We suspect that $\mathbf{CF}_{\cb}$ is not closed under any of these operations.

\section{Control by Petri nets with place capacities}
\label{sec:PNC}

We will first establish the connection between context-free Petri nets with place capacities and capacity-bounded grammars. Later we will investigate the generative power of various extended context-free Petri nets with place capacities.


The proof for the equivalence between context-free grammars and grammars controlled by cf Petri nets can be immediately transferred to context-free grammars and Petri nets with capacities:
\begin{theorem}
  \label{thm:CapacityPetriNetGrammar}
  Grammars controlled by context-free Petri nets with place capacity functions generate the family of capacity-bounded context-free languages.
\end{theorem}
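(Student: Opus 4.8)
**The plan is to exhibit a two-way correspondence between a capacity-bounded context-free grammar and a context-free Petri net with place capacity, essentially unfolding the definition of the cf Petri net associated with a grammar and observing that the place-capacity constraint is exactly the nonterminal-capacity constraint.**

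First I would recall the standard simulation. Given a context-free grammar $G'=(V,\Sigma,S,R)$, its associated cf Petri net $N=(P,T,F,\phi,\beta,\gamma,\iota)$ has a place for each nonterminal, a transition for each rule, $\iota$ marking only $\beta^{-1}(S)$, and arcs reflecting the net change of nonterminals when a rule is applied. The key invariant — provable by a routine induction on the length of the derivation / occurrence sequence — is that after a derivation $S\To^* \alpha$ using rules $r_1\cdots r_k$, the marking $\mu$ reached by the occurrence sequence $\gamma^{-1}(r_1)\cdots\gamma^{-1}(r_k)$ satisfies $\mu(p)=|\alpha|_{\beta(p)}$ for every $p\in P$; conversely every occurrence sequence of $N$ corresponds to a leftmost (say) derivation of a sentential form $\alpha$ with $\mu(p)=|\alpha|_{\beta(p)}$. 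Enabledness of a transition $t=\gamma^{-1}(A\to v)$ at $\mu$ just says $\mu(\beta^{-1}(A))\ge 1$, i.e.\ $A$ occurs in $\alpha$, so this is a faithful bijection between derivation steps and transition firings.

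Next I would add the capacity. Given a capacity-bounded context-free grammar $G=(V,\Sigma,S,R,\kappa)$, define the place capacity $\kappa_N$ on $N$ by $\kappa_N(p)=\kappa(\beta(p))$. By the invariant above, a marking $\mu$ arising along an occurrence sequence is valid (i.e.\ $\mu(p)\le\kappa_N(p)$ for all $p$) if and only if the corresponding sentential form $\alpha$ satisfies $|\alpha|_A\le\kappa(A)$ for all $A\in V$; and a transition is enabled under the capacity rule iff the successor sentential form also respects all capacity bounds. Hence a derivation respecting $\kappa$ in $G$ corresponds exactly to a valid occurrence sequence in the capacity-restricted net, and a word lies in $L(G)$ iff it admits such a derivation ending in a sentential form over $\Sigma$ — which corresponds to the net reaching the all-zero marking. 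This gives $L(G)$ as the language of the cf Petri net with place capacity $\kappa_N$. Conversely, starting from any cf Petri net with place capacity $N=(P,T,F,\phi,\iota,\kappa_N)$ relative to some $G'$, the grammar $(V,\Sigma,S,R,\kappa)$ with $\kappa(A)=\kappa_N(\beta^{-1}(A))$ has exactly the net's language by the same correspondence. Combining both directions yields the equality of the two language families.

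The main obstacle — really the only point requiring care — is making precise what "the language generated by a cf Petri net with place capacity" means as a control mechanism on $G'$ (analogous to the $z$-PN controlled grammars defined earlier): one must fix that a word $w$ is generated iff there is a derivation $S\xRightarrow{r_1\cdots r_k}w\in\Sigma^*$ together with a successful occurrence sequence $t_1\cdots t_k$ with $\gamma(t_1\cdots t_k)=r_1\cdots r_k$, where "successful" now incorporates validity of all intermediate markings under $\kappa_N$ and termination at the all-zero marking. Once this is stipulated, the final marking being zero forces $w\in\Sigma^*$, the capacity constraint on markings translates verbatim to the capacity constraint on sentential forms, and the theorem follows from the already-known grammar/cf-Petri-net equivalence with no new combinatorial content — the induction is identical to the unrestricted case, merely carrying the extra inequality $\mu(p)\le\kappa_N(p)$ through each step.
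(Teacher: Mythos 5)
Your proposal is correct and matches the paper's approach: the paper gives no detailed argument, merely observing that the standard grammar/cf-Petri-net equivalence ``can be immediately transferred'' once capacities are added, and your write-up fills in exactly that correspondence (markings count nonterminal occurrences, so place capacity $\kappa_N(p)=\kappa(\beta(p))$ translates the marking-validity constraint verbatim into the sentential-form constraint, in both directions). Your added care about what ``successful'' means for the capacity-restricted net is a reasonable explication of what the paper leaves implicit, not a deviation.
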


Let us now turn to grammars controlled by extended cf Petri nets with capacities. We will first study the generative power of capacity-bounded matrix and vector grammars, which are closely related to these Petri net grammars.
\begin{theorem}
  \label{thm:matrixGrammarBounds}
  $\MAT_{\finite}=\VEC^{[\lW]}_{\cb}=\MAT^{[\lW]}_{\cb}=\SMAT^{[\lW]}_{\cb}$.
\end{theorem}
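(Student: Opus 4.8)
The plan is to establish the chain of equalities
$\MAT_{\finite}=\VEC^{[\lW]}_{\cb}=\MAT^{[\lW]}_{\cb}=\SMAT^{[\lW]}_{\cb}$
by proving inclusions that close the loop. First I would prove
$\MAT_{\finite}\subseteq\SMAT^{[\lW]}_{\cb}$: every finite-index matrix language is in $\GS_{\cb}$ by Lemma~\ref{MATfinInCScb}, and a capacity-bounded phrase-structure grammar due to Ginsburg and Spanier can be viewed as a very restricted kind of capacity-bounded grammar with structured derivations; more directly, since $\SMAT^{[\lW]}$ already contains $\MAT$ and the finite-index restriction on a matrix grammar translates into a capacity bound (the index $k$ gives a capacity $\mathbf{1}$ grammar after the nonterminal-splitting trick of the first Lemma applied to the matrix setting), we get $\MAT_{\finite}\subseteq\MAT^{[\lW]}_{\cb}\subseteq\SMAT^{[\lW]}_{\cb}$. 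The easy inclusions $\MAT^{[\lW]}_{\cb}\subseteq\SMAT^{[\lW]}_{\cb}$ and $\VEC^{[\lW]}_{\cb}\subseteq\SMAT^{[\lW]}_{\cb}$ follow because a matrix (resp. vector) derivation is a special case of a semi-shuffle derivation, and the capacity bound is preserved.

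Next I would close the cycle by proving $\SMAT^{[\lW]}_{\cb}\subseteq\MAT_{\finite}$. Here the key observation is that if a semi-matrix grammar has a capacity function bounding every nonterminal, then every sentential form in a valid derivation contains at most $|V|$ nonterminal occurrences (when the capacity is $\mathbf{1}$; in general at most $\sum_A\kappa(A)$, which after the splitting lemma is again $|V'|$). Thus the derivation has bounded index. The plan is to show that such a derivation can be simulated by a finite-index matrix grammar: one tracks, as before in the proof of Lemma~\ref{lem:GScbSubsetMATfin}, which nonterminals are currently present using a counter component over $V\cup\overline V$, and one uses the matrix mechanism to enforce that the rules of a semi-matrix "round" (an element of the shuffle of sequences from $\bigcup_i m_i^*$) are applied in the correct grouped fashion. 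Since only boundedly many nonterminals are ever present, the bookkeeping alphabet is finite and the simulation stays within $\MAT_{\finite}$. Because $\MAT_{\finite}$ is known to be closed under the relevant operations and contains $\CF_{\finite}$, and because erasing rules in a finite-index matrix grammar can be removed without increasing the index, the superscript $[\lW]$ on the left family causes no difficulty — the erasing versions of finite-index matrix, vector and semi-matrix grammars all collapse to $\MAT_{\finite}$.

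Finally I would note the remaining inclusions needed to complete the loop, namely $\MAT^{[\lW]}_{\cb}\subseteq\MAT_{\finite}$ and $\VEC^{[\lW]}_{\cb}\subseteq\MAT_{\finite}$, which are immediate from $\SMAT^{[\lW]}_{\cb}\subseteq\MAT_{\finite}$ since the first two families are contained in the third. Together with $\MAT_{\finite}\subseteq\MAT^{[\lW]}_{\cb}$, $\MAT_{\finite}\subseteq\VEC^{[\lW]}_{\cb}$ and $\MAT_{\finite}\subseteq\SMAT^{[\lW]}_{\cb}$, all four families coincide.

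The main obstacle I anticipate is the simulation $\SMAT^{[\lW]}_{\cb}\subseteq\MAT_{\finite}$, specifically enforcing the semi-shuffle structure. In a semi-matrix derivation the applied rule sequence must be a shuffle of words each lying in some $m_i^*$, so a matrix grammar simulating it must remember, for each "active" copy of a matrix $m_i$, how far along that copy the derivation currently is — and there could in principle be unboundedly many active copies. The capacity bound rescues this: since at most $\sum_A\kappa(A)$ nonterminal occurrences exist at any time and each active matrix copy must, between now and completion, rewrite at least one currently-present nonterminal, the number of simultaneously active incomplete matrix copies is bounded by that same constant. Making this counting argument precise — and encoding the finitely many possible "states" of the active copies into the finite nonterminal alphabet of the simulating matrix grammar — is the technical heart of the proof.
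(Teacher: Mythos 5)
Your plan hinges on two steps that do not hold as stated. The more serious one is the counting claim at the ``technical heart'': you argue that, because at most $\sum_A\kappa(A)$ nonterminal occurrences are present at any moment and each active matrix copy ``must, between now and completion, rewrite at least one currently-present nonterminal'', only boundedly many matrix copies can be simultaneously open. This is false: an unfinished copy only needs the left-hand nonterminals of its \emph{remaining} rules to be present at the future moments when those rules are applied, not now. With capacity $\mathbf{1}$ take matrices $m_1=(A\to a,\ B\to b)$, $m_2=(S\to AS)$, $m_3=(S\to BS)$, $m_4=(S\to c)$: alternating $m_2$ with the first rule of a fresh copy of $m_1$ reaches $a^nS$ with $n$ copies of $m_1$ open and all their $B\to b$ steps deferred, never violating the capacity; the copies are closed later as $m_3$ produces $B$'s one at a time. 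So the number of simultaneously open copies is not bounded by any function of the capacities, and the finite bookkeeping alphabet you want for the simulation into $\MAT_{\finite}$ cannot record their states --- this is fatal at least for the vector case, where copies of the same matrix may interleave. (For semi-matrix grammars the semi-shuffle discipline itself may limit unfinished copies to one per matrix, but that is a different argument from the one you give, and your chain needs the vector case anyway.) The paper avoids this difficulty entirely by never leaving the vector (resp.\ matrix, semi-matrix) family: from a capacity-$\mathbf{1}$ vector grammar it builds a finite-index \emph{vector} grammar, wrapping each rule $r$ into a matrix $\mu(r)$ with a lock $C/C'$ and flags $B_i/B_i'$ that enforce the capacity, so the control mechanism itself keeps managing the open matrices, and then it invokes the known equalities $\MAT_{\finite}=\VEC_{\finite}=\VEC^{\lW}_{\finite}$ (and their analogues for the other types).

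The second gap is in your ``easy inclusions'' $\MAT^{[\lW]}_{\cb}\subseteq\SMAT^{[\lW]}_{\cb}$ and $\VEC^{[\lW]}_{\cb}\subseteq\SMAT^{[\lW]}_{\cb}$. That a matrix (or vector) derivation is a special case of a semi-shuffle derivation only shows that the \emph{same} grammar, read with the more permissive derivation mode, generates a superset of its original language; it does not produce a semi-matrix grammar generating exactly the given language, so no inclusion of families follows. Since your whole argument routes $\MAT^{[\lW]}_{\cb}$ and $\VEC^{[\lW]}_{\cb}$ through $\SMAT^{[\lW]}_{\cb}$, these two gaps together leave the theorem unproved. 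The parts of your plan that are sound are the trivial direction (an index-$k$ grammar equals the same grammar with capacity constantly $k$, giving $\MAT_{\finite}\subseteq$ each capacity-bounded family) and the reduction to capacity $\mathbf{1}$; for the converse direction you should either adopt the paper's scheme separately for each of the three grammar types, or find a genuinely new argument replacing the false bound on open matrix copies.
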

\begin{proof}
  We give the proof of $\MAT_{\finite}=\VEC^{\lW}_{\cb}$. The other equalities can be shown in an analogous way. Since $\MAT_{\finite}=\VEC_{\finite}=\VEC^{\lW}_{\finite}$, it suffices to prove $\VEC_{\finite}\subseteq \VEC^{\lW}_{\cb}$ and $\VEC^{\lW}_{\cb}\subseteq \VEC^{\lW}_{\finite}$. The first inclusion is obvious because any vector grammar of finite index $k$ is equivalent to the same vector grammar with capacity function constantly $k$.

  To show $\VEC^{\lW}_{\cb}\subseteq \VEC^{\lW}_{\finite}$, consider a capacity-bounded 
  vector grammar
  $$G=(\{A_0,A_1,\ldots,A_m\},\Sigma,A_0,M,\mathbf{1}).$$
  (The proof that it suffices to consider the capacity function $\mathbf{1}$ is like for usual grammars.) To construct an equivalent vector grammar of finite index, we introduce the new  nonterminal symbols $B_i,B'_i$, $0\leq i\leq m$, $C$, $C'$. For any rule $r: A\to \alpha$, we define the matrix $\mu(r)=(C\to C',s_0,s_1,\ldots,s_m,r,C'\to C)$ such that
  $s_i=B_i \to B'_i$ if $A=A_i$ and $|\alpha|_A=0$, $s_i=B'_i \to B_i$ if $A\neq A_i$ and $|\alpha|_{A_i}=1$, and  $s_i$ is empty, otherwise.

  Now we can construct $G'=(V',\Sigma,S',M')$ where $M'$ contains
  \begin{itemize}
    \item for any matrix $m=(r_1,r_2, \ldots, r_k)$, the matrix $m'=(\mu(r_1), \ldots, \mu(r_k))$,
    \item the start matrix $(S'\to A_0 B_0 B'_1 \cdots B'_m C)$,
    \item the terminating matrix $(C\to \lW, B'_0\to \lW,B'_1\to\lW, \ldots, B'_m\to \lW)$,
  \end{itemize}
  and $V'=V\cup \{B_i,B'_i\suchthat 0\leq i\leq m\} \cup \{S',C,C'\}$.
  The construction of $G'$ allows only derivation sequences where complete   submatrices $\mu(r)$ are applied:   when the sequence $\mu(r)$ has been started, there is no symbol $C$ before   $\mu(r)$ is finished, and no other submatrix can be started.   It is easy to see that $G'$ can generate after applying complete submatrices exactly those words $\beta \gamma C$ such that $\beta \in (V\cup \Sigma)^*$,  
  \hbox{$\gamma\in \{B_0,B'_0\}\{B_1,B'_1\} \cdots \{B_m,B'_m\}$} such that $\beta$ can be derived in $G$ and $|\gamma|_{B_i}=1$ iff $|\beta|_{A_i}=1$. Moreover, $G'$ is of index $2 |V|+1$.
\end{proof}

By constructions similar to those in \cite{tur} and Theorem~\ref{thm:matrixGrammarBounds} we can show with respect to weak capacities:
\begin{theorem}\label{lem:VfinInwPNch}
For $z\in \{h,c,s\}$, $\MAT_{\finite}=\mathbf{wPN}^{[\lW]}_{cz}$.
\end{theorem}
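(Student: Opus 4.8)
The plan is to establish the two inclusions $\MAT_{\finite}^{[\lW]} \subseteq \mathbf{wPN}^{[\lW]}_{cz}$ and $\mathbf{wPN}^{[\lW]}_{cz} \subseteq \MAT_{\finite}^{[\lW]}$ separately, for each $z \in \{h,c,s\}$, and then invoke Theorem~\ref{thm:matrixGrammarBounds} to close the loop with $\MAT_{\finite} = \MAT_{\finite}^{[\lW]}$. For the inclusion $\MAT_{\finite} \subseteq \mathbf{wPN}^{[\lW]}_{cz}$, I would start from a matrix grammar of finite index $k$ (equivalently, by Theorem~\ref{thm:matrixGrammarBounds}, a capacity-bounded matrix grammar with capacity function $\mathbf{1}$, after renaming nonterminals as in the proof of that theorem so every nonterminal occurs at most once in each sentential form). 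The key point is that the constructions of \cite{tur} simulate the matrix control by routing tokens through the auxiliary chains/cycles/shared-cycle structure $Q$ of the $z$-Petri net; applying those constructions verbatim to the capacity-$\mathbf{1}$ matrix grammar yields a $z$-PN controlled grammar whose underlying cf Petri net, when equipped with the capacity function $\kappa \equiv k$ on the nonterminal-labeled places $P$ (weak capacity), exactly enforces the capacity bound on sentential forms. One must check that adding the weak capacity bound does not kill any of the simulating occurrence sequences — this holds because in the original finite-index derivation the number of each nonterminal never exceeds $k$, so the corresponding place never needs more than $k$ tokens.

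For the reverse inclusion $\mathbf{wPN}^{[\lW]}_{cz} \subseteq \MAT_{\finite}^{[\lW]}$, I would take a grammar $G = (V,\Sigma,S,R,N_z)$ controlled by a $z$-Petri net with weak capacity function $\kappa$. First observe that the weak capacity bound means precisely that in any successful occurrence sequence the marking restricted to $P$ — i.e.\ the count of each nonterminal in the simulated sentential form — stays bounded by $\kappa$. Hence the underlying context-free derivation of $G$ has index at most $\sum_{A \in V}\kappa(A)$. So $L(G)$ is generated by a context-free grammar of finite index, together with the $z$-Petri net control discipline on the rule sequence. Now combine the two regulation mechanisms: by the results on $z$-PN controlled grammars (without capacity), such control is subsumed by matrix control, and since the control does not increase the index (it only restricts which rule sequences are allowed), the resulting language lies in $\MAT_{\finite}^{[\lW]}$. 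Concretely I would build a matrix grammar whose matrices encode both the $z$-net chain/cycle routing — as in the proof of Theorem~\ref{thm:matrixGrammarBounds}, using marker nonterminals $B_i, B_i'$ to track the count of each $A_i$ and to ensure completeness of each simulated submatrix — and observe that the resulting matrix grammar has index bounded in terms of $|V|$ and $\max_A \kappa(A)$.

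The main obstacle I expect is the uniform treatment of the three cases $z \in \{h,c,s\}$: the auxiliary structure differs (disjoint chains, disjoint cycles, cycles through a common place $p_0$), and in the $c$- and $s$-cases the control places $Q$ carry tokens throughout the derivation, so one has to argue that weak capacity — which bounds only the places of $P$, not those of $Q$ — still yields a finite-index bound on the derivation, and that the extra tokens sitting in $Q$ are faithfully simulated by the bounded marker nonterminals of the matrix grammar without inflating the index. I would handle this by giving the argument in detail for $z = h$ and then indicating that the cycle-based cases are analogous, since the tokens in the cycles of a $c$- or $s$-net are bounded (each cycle holds exactly one token by the definition of $\mu_0$ and $\tau$) and hence contribute only a bounded additive term to the index. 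The remaining verifications — that $L(G') = L(G)$ in both directions, and the precise index estimate — are routine inductions on derivation length of the kind already carried out in the proofs of Lemma~\ref{lem:GScbSubsetMATfin} and Theorem~\ref{thm:matrixGrammarBounds}, and I would only sketch them.
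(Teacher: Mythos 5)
Your first inclusion ($\MAT_{\finite}\subseteq\mathbf{wPN}^{[\lW]}_{cz}$) is essentially the paper's argument: start from a finite-index grammar of the regulation type matching the net structure (the paper uses a finite-index \emph{vector} grammar for $z=h$, by Theorem 16 of \cite{tur}), apply the construction of \cite{tur}, and observe that finite index keeps every nonterminal-labeled place within the bound, so imposing the weak capacity kills no successful occurrence sequence. That part is fine (the capacity-$\mathbf{1}$ normalization you invoke is unnecessary but harmless).

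The gap is in the converse inclusion. From the weak capacity you correctly conclude that the underlying context-free derivations have index at most $\sum_{A\in V}\kappa(A)$, but the step ``the control is subsumed by matrix control and does not increase the index'' is not a valid inference once the control has to be realized \emph{inside} a matrix grammar, where all bookkeeping must live in the sentential forms. Concretely, for $z=h$ the control places $Q$ are the chain places, and these are exactly the places a weak capacity does not bound: a chain token records a partially executed matrix, and unboundedly many matrix instances can be pending simultaneously even while the marking on $P$ (hence the index of the underlying derivation) stays within $\kappa$, since the nonterminal occurrences needed to finish them can be created later by other instances. So your plan of encoding the chain routing with finitely many markers $B_i,B_i'$ and claiming an index bound in terms of $|V|$ and $\max_A\kappa(A)$ breaks down precisely in the case $z=h$ that you propose to treat in detail; your worry about $c$ and $s$ is misplaced, since there each cycle invariantly carries exactly one token, forcing a matrix to complete before it restarts. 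The paper avoids this by not going directly to a matrix grammar: it converts the $h$-PN controlled grammar with weak capacity into a \emph{capacity-bounded vector grammar} (extending the construction of \cite{tur} by assigning the place capacities to the corresponding nonterminals), so that pending instances are carried by the vector control (the shuffle) rather than by nonterminals, and then invokes Theorem~\ref{thm:matrixGrammarBounds}, i.\,e.\ $\VEC^{\lW}_{\cb}=\MAT_{\finite}$, whose proof in turn uses the nontrivial known equality $\MAT_{\finite}=\VEC^{\lW}_{\finite}$; these steps are not the ``routine inductions'' your sketch suggests. If you reroute your second inclusion through $\VEC^{[\lW]}_{\cb}$ in this way, the rest of your outline goes through.
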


\begin{proof}
We give only the proof for $z=h$. The other equations can be shown using  analogous arguments. By Theorem~\ref{thm:matrixGrammarBounds} it is sufficient to show the inclusions $\VEC_{fin}\subseteq\mathbf{wPN}_{ch}$ and $\mathbf{wPN}^{\lW}_{ch}\subseteq \VEC^{\lW}_{cb}$.

As regards the first inclusion, let $L$ be a vector language of finite index (with or without erasing rules), and let $ind(L)=k$, $k\geq 1$. Then, there is a vector  grammar $G=(V, \Sigma, S, M)$ such that $L=L(G)$ and $ind(G)\leq k$. Without loss of generality we assume that $G$ is without repetitions. Let $R$ be the set of the rules of $M$.
By Theorem 16 in \cite{tur}, we can construct an $h$-Petri net controlled grammar $G'=(V, \Sigma, S, R, N_h)$, $N_h=(P\cup Q, T, F\cup E, \varphi, \zeta, \gamma, \mu_0, \tau)$, which is equivalent to the grammar $G$.
By definition, for every sentential form $w\in (V\cup\Sigma)^*$ in the grammar  $G$, $|w|_V\leq k$. It follows that $|w|_A\leq k$ for all $A\in V$. By bijection $\zeta:P\cup Q\to V\cup\{\lW\}$ we have $\mu(p)=\mu(\zeta^{-1}(A))\leq k$ for all $p\in P$ and $\mu \in \mathcal{R}(N_h, \mu_0)$, i.\,e., the corresponding cf Petri net $(P, T, F, \phi, \beta, \gamma, \iota)$ is with $k$-place capacity. Therefore~$G'$ is with weak place capacity.

On the other hand, the construction of an equivalent vector grammar for an $h$-Petri net controlled grammar,  can be extended to the case of weak capacities just by assigning the capacities of the corresponding places to the nonterminal symbols of the grammar.
\end{proof}

As regards strong capacities, there is no difference between weak and strong capacities for grammars controlled by $c$- and $s$-Petri nets because the number of tokens in every circle is limited by $1$. This yields:
\begin{corollary}\label{lem:wPNx=sPNx}
For $z\in \{c,s\}$, $\MAT_{\finite}=\mathbf{sPN}^{[\lW]}_{cz}$.
\end{corollary}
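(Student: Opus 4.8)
The plan is to derive the corollary from Theorem~\ref{lem:VfinInwPNch}, which already gives $\MAT_{\finite}=\mathbf{wPN}^{[\lW]}_{cz}$ for $z\in\{c,s\}$, by showing that for $c$- and $s$-Petri nets weak and strong place capacities coincide, i.e.\ $\mathbf{wPN}^{[\lW]}_{cz}=\mathbf{sPN}^{[\lW]}_{cz}$.

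The key structural fact is that the control places $Q=\bigcup_{\rho\in\Pi}P_\rho$ of such a net are $1$-bounded. In a $c$-Petri net the set $Q$ is a disjoint union of cycles; since every transition belongs to exactly one of these cycles and the arcs in $E$ have weight $1$, firing a transition merely relocates a single token within its cycle, so the number of tokens on each cycle is invariant and equals its initial value $1$. In an $s$-Petri net all cycles share the single place $p_0$, which holds one token initially, and again every transition only moves a token inside $Q$, so the total number of tokens on $Q$ stays $1$. Hence in every marking reachable in the underlying capacity-free net — and a fortiori in every marking reachable under a weak or strong capacity — each control place carries at most one token.

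Using this I would establish both inclusions. For $\mathbf{wPN}^{[\lW]}_{cz}\subseteq\mathbf{sPN}^{[\lW]}_{cz}$: given a $z$-PN controlled grammar with weak capacity $\kappa$ on the places of $P$, extend $\kappa$ by $\kappa(q)=1$ for every $q\in Q$; by the fact above the added constraints hold in every marking the net could reach anyway, so no transition is newly disabled and the generated language is unchanged. For $\mathbf{sPN}^{[\lW]}_{cz}\subseteq\mathbf{wPN}^{[\lW]}_{cz}$: starting from a strong capacity $\kappa$ on $P\cup Q$ (which we may assume to be at least $1$ on every place), each control place carries at most $1\le\kappa(q)$ tokens throughout, so the capacities on $Q$ are redundant; deleting them yields a weakly capacity-bounded $z$-Petri net with exactly the same successful occurrence sequences, hence the same language. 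Combining the two inclusions with Theorem~\ref{lem:VfinInwPNch} gives $\MAT_{\finite}=\mathbf{sPN}^{[\lW]}_{cz}$ for $z\in\{c,s\}$.

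The only step needing care is the claim that the $Q$-capacities never bind, which is precisely the token-conservation observation for cycles above; the rest is routine, and the marginal case of a control place of capacity $0$ needs only a one-line remark. Note that the argument genuinely fails for $h$-Petri nets, whose control places form chains rather than cycles and on which tokens may accumulate without bound, which is why the corollary is restricted to $z\in\{c,s\}$.
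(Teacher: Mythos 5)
Your proposal is correct and is essentially the paper's own argument: the corollary is justified there precisely by the observation that in $c$- and $s$-Petri nets each cycle carries exactly one token, so the capacities on control places never bind, weak and strong capacity coincide, and Theorem~\ref{lem:VfinInwPNch} yields the equality. Your write-up simply makes the token-conservation invariant and the two resulting inclusions explicit.
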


The only families not characterized yet are $\mathbf{sPN}^{[\lW]}_{ch}$. We conjecture that they are also equal to $\MAT_{\finite}$.



\section{Conclusions}
\label{sec:conclusions}

 We have introduced grammars with capacity bounds and their Petri net controlled counterparts. In particular, we have shown that their generative power lies strictly between the context-free languages of finite index and the matrix languages of finite index. Moreover, we studied extended context-free Petri nets with  place capacities. A possible extension of the concept is to use capacity functions that allow an unbounded number of some nonterminals.

The investigation shows that for every grammar controlled by a cf Petri net with $k$-place capacity, $k\geq 1$, there exists an equivalent grammar controlled by a cf Petri net with 1-place capacity, i.\,e., the families of languages generated by cf Petri nets with place capacities do not form a hierarchy with respect to the place capacities.

\bibliographystyle{eptcs}
\bibliography{stiebe}

\end{document}